\documentclass[submission,copyright,creativecommons]{eptcs}
\providecommand{\event}{NCL 2026} 
\usepackage{array}
\usepackage{iftex}
\usepackage{amsthm}
\usepackage{amsmath}
\usepackage{amsfonts}
\usepackage{amssymb}
\usepackage{enumitem}
\usepackage{stix}
\usepackage{tikz}
\usepackage{capt-of}
\usepackage{hyperref}

\newtheorem{theorem}{Theorem}[section]

\newtheorem{fact}[theorem]{Fact}
\newtheorem{lemat}[theorem]{Lemma}
\newtheorem{wn}[theorem]{Corollary}

\ifpdf
  \usepackage{underscore}         
  \usepackage[T1]{fontenc}        
\else
  \usepackage{breakurl}           
\fi

\title{On Some Logics Intermediate Between S4.2 and S4.3 Defined by the Notion of Independence}

\def\titlerunning{Logics Between S4.2 and S4.3}
\def\authorrunning{M. A. Hałapacz}
\begin{document}
\author{Maciej A. Hałapacz
\institute{Department of Logic and Methodology of Sciences\\ University of Wrocław\\ Wrocław, Poland}\email{maciej.halapacz@uwr.edu.pl}}

\maketitle

\begin{abstract}
Hamkins and L\"{o}we described modal logic S4.2 through the notion of sets of independent buttons --- statements that are possibly necessary --- and switches --- necessarily contingent statements. In this article we investigate logics intermediate between S4.2 and S4.3 that are defined through a different notion of independence. We provide an axiomatic description of such logics, state some basic facts and prove various completeness and decidability results. 
\end{abstract}

\section{Introduction}

First-order structures can be investigated through the lens of modal logic: we investigate a propositional modal logic of the frame $\langle C, R \rangle$, where $C$ is a given class of first-order models of a certain type and $R$ is a relation on those structures. Since this is a frame for a propositional modal logic, valuations should be understood accordingly: let $t \colon Var \to Fm_L$ be a translation from the propositional to first-order language. For all such translations $t$ we consider those valuations $v_t$ such that $v_t(p) =\{ M \in C : M \models t(p)\}$. Various first-order classes of structures have been studied in such manner: models of set theory \cite{HL2008}, abelian groups \cite{BBL2021}, graphs \cite{WH2024} and lattices \cite{ja} under embeddings, as well as general theory of such frames with submodel relation \cite{SS}. Many of those logics turn out to be well-known modal logics such as $S4$, $S4.2$, $S4.2.1$, $S4.3$, $S5$. Those logics can be defined through the notions of \emph{buttons}, possibly necessary statements, and \emph{switches}, necessarily possible statements, as well as the notion of the families of those statements being \emph{independent}. In \cite{HL2008} the authors pose the question whether a given model of set theory under a forcing extension relation can yield any of the intermediate modal logics between S4.2 and S4.3, defined through the finite boundaries on the families of independent buttons. In this paper, we investigate the notion of finite boundaries on the size of buttons, point out some ambiguities within this proposed characterization and research new logics, defined within the same framework: we provide various completeness, as well as decidability results. 

On the other hand, our results are interesting from purely logical point of view, defining new classes of logics intermediate between $S4.2 $ and $S4.3$, similarly to Hughes's research in \cite{HU}.

\section{Modal logic}

Let $Var$ be a countable set of propositional variables. Then we take the set $\mathtt{P} = Var \cup \{\square, \land, \neg \}$ to be a language of propositional modal logic, where $\square, \neg$ are unary operators and $\land$ is binary. The set of all formulas of $\mathtt{P}$, $Fm_\mathtt{P} $ is defined in a standard way. Other operators ($\lozenge, \lor, \rightarrow, \leftrightarrow$) are defined in a standard way as well.

In the above vocabulary, we define a set $\Lambda \subseteq Fm_\mathtt{P}$ to be a \emph{modal logic} iff it is closed under Modus Ponens and uniform substitutions. $\Lambda$ is called a \emph{normal} modal logic if it contains the axiom $K$: \newline \mbox{$\square(p \rightarrow q) \rightarrow (\square p \rightarrow \square q)$}, and is closed under the necessitation rule: $\frac{\vdash \ \alpha}{\vdash \ \square \alpha}$.

Consider the following formulas:

\begin{align*}
    T:& \ \square p \rightarrow p, \\
    4:& \ \square p \rightarrow \square \square p, \\
    .2:& \ \lozenge \square p \rightarrow \square \lozenge p, \\
    .3:& \ \square(\square p \rightarrow q) \lor \square (\square q \rightarrow p).
\end{align*}
The modal logic $S4.2$ is the smallest normal modal logic containing formulas $K$, $T$, $4$, $.2$. Modal logic $S4.3$ is the smallest normal modal logic containing all formulas of $S4.2$ and the formula $.3$.

Hamkins and L\"{o}we \cite{HL2008} proved that the logic S4.2 is consistent with arbitrarily large finite families of independent buttons and switches. Let us clarify what that means.

Let $\mathfrak{M} = (M, R, v)$ be a Kripke model of a logic extending S4.2. A button in a possible world $m \in M$ is a sentence $\beta \in Fm_\mathtt{P}$ such that $m \Vdash \lozenge \square \beta$. A switch in a world $w \in M$, on the other hand, is a sentence $\sigma \in Fm_\mathtt{P}$ such that $w \Vdash \square (\lozenge \sigma \land \lozenge \neg \sigma )$. If $m \Vdash \lozenge \square \beta$ and $m \Vdash \square \beta$, we say that the button $\beta$ is \emph{pushed} (otherwise it is \emph{unpushed}). If $m \Vdash \square (\lozenge \sigma \land \lozenge \neg \sigma )$ and $m \Vdash \sigma$ we say that a switch $\sigma$ is \emph{on} (otherwise it is \emph{off}). A finite set $X$ of sentences that are buttons and switches is said to be \emph{independent} in a world $w$ iff for each set of buttons $B \subseteq X$ and each set of switches $S \subseteq X$, $w$ validates the following formula: 
\begin{equation}\label{butswi}
\square \lozenge (\bigwedge \square \beta \land \bigwedge \neg \square \beta' \land \bigwedge \sigma \land \bigwedge \neg \sigma'),\end{equation}
where $\sigma \in S$, $\sigma' \in X \setminus S$ are switches and $\beta \in B'$, $\beta' \in X \setminus B'$ are buttons in $w$, $B' = B \cup P$ and $P$ is the set of all buttons pushed in the evaluated world ($P = \varnothing \text{ in } w$). For each finite family of independent buttons and switches, there exists a model $\mathfrak{M}$ of S4.2 where a certain world $w \in M$ satisfies the formula \ref{butswi}. 

By contrast, the cardinality of families of independent buttons is at most 1 under the logic $S4.3$. Indeed, if we take any two buttons, the axiom $.3$ guarantees that pushing one of them also pushes the other --- suppose that $w \Vdash \lozenge \square p, \lozenge \square q$. That means there are some accessible worlds $u, v$ such that $u \Vdash \square p$ and $v \Vdash \square q$. But frames for $S4.3$ are connected:  $uRv$ or $vRu$, so $u \Vdash \square q$ or $v \Vdash \square p$. 

$S4.3$ can be characterized instead through the notion of a \emph{ratchet}. A ratchet in a possible world $m$ is a finite set of buttons $\{\beta_1, \dots, \beta_n\}$, such that for any $1 \leqslant i \leqslant n$ we have $m \Vdash \lozenge (\square\beta_i \land \neg \beta_j)$ for $n \geqslant j>i$ and $m \Vdash \square\beta_i \rightarrow \beta_k$, for $k \leqslant i$. 


This leads to the question: what are the modal logics that limit the cardinality of sets of independent buttons to $n$? The remainder of this section is devoted to answering this problem. However, this issue is not as straightforward as it may seem.

Note that to answer this question we need not mention the independence of families of switches with the families of buttons that are of interest. For this reason, we will omit referencing the switches in the rest of the paper. However, one can easily observe that the following results allow for the independence of arbitrarily large families of switches together with any notion of independence stated below.

Let us say that a logic $L$ has the property $B_n$ if $n$ is the greatest cardinality of sets of independent buttons allowed by the $L$. 
Let us first focus on the case where $n=2$. Then, we will generalize the results to an arbitrary natural number $n$.

Let us state the following fact:

\begin{fact}
    For any $n>1$ there are many modal logics with the property $B_n$.
\end{fact}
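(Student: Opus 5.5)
The plan is to exhibit, for each $n>1$, at least two distinct normal extensions of $S4.2$ that both have the property $B_n$. The point is that bounding the number of independent buttons constrains only a certain "width" of the frame, and this leaves room for further axioms.

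I will work semantically with finite rooted S4.2 frames: reflexive, transitive, with a final cluster. Independence of buttons $\beta_1,\dots,\beta_k$ at the root $w$ requires that for every subset $B$ there is an accessible world $u$ from which $\square\bigwedge_{i\in B}\beta_i\land\bigwedge_{j\notin B}\neg\square\beta_j$ remains possible. By a standard argument, $k$ independent buttons exist in some model on a frame iff the frame has $2^k$ "incomparable-in-the-right-way" cone patterns. The cleanest witness is a frame containing a copy of the Boolean lattice $2^k$ (ordered by inclusion) below the top. Conversely, from independent buttons one extracts an embedding of such a pattern.

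\textbf{First logic.} Take $L_1$ to be the logic of the class of frames that are $(2^n$ as a poset with top and bottom$)$, together with all its p-morphic images and generated subframes. Valuating $\beta_i$ as "coordinate $i$ is $1$" shows that $n$ independent buttons are realisable. To show that $n+1$ are impossible, I will prove that any frame validating $L_1$ cannot realise the $2^{n+1}$ pattern.

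\textbf{Second logic.} Take $L_2$ to be the logic of the frame $2^n$ with the top point replaced by a two-element cluster, or alternatively $2^n$ preceded by a chain of length $2$. Buttons behave the same way, since they only see the upward cone structure, so $L_2$ also has $B_n$.

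To see that $L_1\neq L_2$, I will separate them by a formula. In the cluster variant, a switch $\square(\lozenge p\land\lozenge\neg p)$ is satisfiable, whereas $L_1$ proves $\lozenge\square p\lor\lozenge\square\neg p$, the .1-style axiom McKinsey, since its frames have a top point. This gives infinitely many logics already: vary the cluster size, or add a depth bound via the standard $bd_m$ formulas for larger $m$. Each variant keeps $B_n$, because depth or cluster size does not affect the maximal Boolean pattern.

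The main obstacle is the upper bound: proving that no model of $L_1$ (or $L_2$) has $n+1$ independent buttons. I would first reduce this to finite frames by the finite model property of logics of finite frames. Then I would show that $n+1$ independent buttons at $w$ force $2^{n+1}$ pairwise distinct "pushed sets". These are upward-closed families, and they must embed as a subposet whose size exceeds the number of points of depth-layers available in $2^n$ and its p-morphic images. I would try a counting argument on maximal antichains of the skeleton first.
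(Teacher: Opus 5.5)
Your lower bounds and the separation of $L_1$ from $L_2$ are fine. The Boolean lattice $2^n$ carries $n$ independent buttons, collapsing the top cluster gives the same for the cluster variant, and $\lozenge\square p\lor\lozenge\square\neg p$ is valid on $2^n$ but fails on the cluster variant. The gap is the upper bound. It is the whole content of $B_n$, and you leave it as a plan. Moreover, the plan rests on a false lemma: independent buttons do \emph{not} in general yield a copy of the cube $2^{n+1}$ as a subposet. What they do yield is this. The map $u\mapsto\{\beta : u\Vdash\square\beta\}$ on the cone of $w$ is monotone (by axiom $4$), and the $\square\lozenge$ clause of independence gives it the back condition. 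So it is a p-morphism of the cone of $w$ \emph{onto} $(\mathcal P(X),\subseteq)$. For a counterexample to your lemma, take a root, three successors $u_1,u_2,u_3$, above each $u_i$ two private successors labelled by the two $2$-element sets containing $i$, and a single top point. This frame carries three independent buttons, yet contains no $8$-element cube as a subposet, because no point other than the top lies above two of the $u_i$. Your detour through the finite model property and ``p-morphic images'' also does not control all frames of $L_1$. The frames validating $\mathrm{Log}(2^n)$ are not by definition the p-morphic images and generated subframes of $2^n$; that would need a J\'onsson-type argument. Finally, the ``counting on maximal antichains'' is never specified.

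The missing step is easy once you use closure under substitution. Let $\varphi_k(p_1,\dots,p_k)$ be the conjunction of $\lozenge\square p_i\land\neg\square p_i$ for $i\leqslant k$ and of $\square(\pi_P\to\lozenge\pi_{B\cup P})$ for all $P,B\subseteq\{1,\dots,k\}$. Here $\pi_S$ says that exactly the $p_i$ with $i\in S$ are pushed. Then $\beta_1,\dots,\beta_k$ are independent buttons at $w$ iff $w\Vdash\varphi_k(\beta_1,\dots,\beta_k)$. So it suffices that $\neg\varphi_{n+1}$ is valid on the single frame $2^n$. Then it lies in $L_1=\mathrm{Log}(2^n)$, and so do all its substitution instances, which therefore hold throughout every model of $L_1$. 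Since subsets of independent sets are independent, larger families are excluded as well. Validity on $2^n$ is pure counting. The pushed set is a function of the world, so the $2^{n+1}$ mutually exclusive formulas $\pi_S$ must be realised at $2^{n+1}$ distinct worlds, while $2^n$ has only $2^n$ points. The cluster variant has $2^n+1<2^{n+1}$ points, so the same argument applies.

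This counting breaks down for your ``infinitely many'' variants with large clusters or long initial chains, and there you need the p-morphism form. For clusters: a p-morphism into a poset is constant on clusters, so it factors through the skeleton $2^n$. For a chain below $2^n$: the image of the old bottom must be comparable with every subset of $X$, hence is $\varnothing$ or $X$. The first is excluded by counting. The second is excluded because the chain would then have to map onto the non-chain $\mathcal P(X)\setminus\{X\}$.

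By contrast, the paper chooses axioms $A_n$ and $I_n$ whose substitution instances directly refute $n+1$ independent buttons, so the upper bound is read off the axiom. Distinctness there comes from the frames $M_{2^k}$, which validate $A_2$ but not $I_2$.
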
 \noindent
To observe this, consider the following formulas:
\begin{align*}
&A_2:\quad
\square[\square p \rightarrow
    (\square(\square q \rightarrow r)
    \lor \square(\square r \rightarrow q))] \lor{}\\
&\phantom{A_2:\quad}
\square[\square q \rightarrow
    (\square(\square p \rightarrow r)
    \lor \square(\square r \rightarrow p))] \lor{}\\
&\phantom{A_2:\quad}
\square[\square r \rightarrow
    (\square(\square p \rightarrow q)
    \lor \square(\square q \rightarrow p))],\\[6pt]
&I_2:\quad
\square(\square p \rightarrow q \lor r)
\lor \square(\square q \rightarrow p \lor r)
\lor \square(\square r \rightarrow p \lor q).
\end{align*}

Both the logic $S4.2 + A_2$ and $S4.2 +I_2$ are consistent with the families of two independent buttons. The frame based on a 4-element boolean lattice is a frame with both $A_2$ and $I_2$ valid, and there is a valuation $v$ where $v(p)$ is one ultrafilter of the lattice, while $v(q)$ is the other --- $p$ and $q$ are the buttons in any such model.

\begin{figure}[ht]
\centering
\begin{tikzpicture}[scale=1.2]
    \node (0) at (0,0) {$\varnothing$};
    \node (p) at (-1.5,1.5) {$\{p\}$};
    \node (q) at (1.5,1.5) {$\{q\}$};
    \node (1) at (0,3) {$\{p,q\}$};

    \draw[->] (0) -- (p);
    \draw[->] (0) -- (q);
    \draw[->] (p) -- (1);
    \draw[->] (q) -- (1);
\end{tikzpicture}
\caption{A model satisfying two independent buttons}
\label{fig:boole4}
\end{figure}

However, there are no families of independent buttons of size $\geqslant 3$ in either logic. Obviously, both $A_2$ and $I_2$ exclude this possibility, but $I_2$ is stronger. Let us state some facts about those logics so that this difference is clearer.

Let us use $wR,w_1, w_2, \dots w_n$ as an abbreviation for $wRw_1, wRw_2, \dots wRw_n$. Let us denote by $|w|$ the set of all worlds accessible from $w$. Let us say that $|w|$ is connected if and only if for each $u, v \in |w|$, $uRv$ or $vRu$. Consider the following condition: for all possible worlds $w_0, w_1, w_2, w_3$ we have \begin{center}
   $R_2$: if $w_0Rw_1, w_2,w_3$ and none of the $w_1, w_2, w_3$ are accessible from each other, then $|w_i|$ is connected, for some $i \in \{1, 2, 3\}$. \end{center}
   A frame $F$ is \emph{directed} iff for all $w, u, v \in F$ if $w Ru, v$ then there exists $z$ such that $uRz$ and $vRz$.
\begin{theorem}\label{c1a2}
    The logic $S4.2 + A_2$ is complete with respect to the class of reflexive, transitive, directed Kripke frames that satisfy the condition $R_2$.
\end{theorem}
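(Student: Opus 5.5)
The proof has two halves: soundness of $A_2$ on reflexive, transitive, directed frames satisfying $R_2$, and completeness via the canonical model of $S4.2 + A_2$. For $K$, $T$, $4$ and $.2$, soundness and canonicity are standard: the canonical frame of any normal extension of $S4.2$ is reflexive, transitive and directed. So all the new work concerns $A_2$ and $R_2$.

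\emph{Soundness.} Let $F$ satisfy $R_2$ and suppose $A_2$ fails at $w_0$ under some valuation. Failure of the first disjunct gives $w_1$ with $w_0Rw_1$, $w_1 \Vdash \square p$, and $w_1$ refuting $\square(\square q \rightarrow r) \lor \square(\square r \rightarrow q)$. Hence there are $u,v \in |w_1|$ with $u \Vdash \square q \land \neg r$ and $v \Vdash \square r \land \neg q$. Similarly the other two disjuncts give $w_2$ (with $\square q$) and $w_3$ (with $\square r$).

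\begin{enumerate}[label=(\roman*)]
\item Each $|w_i|$ is not connected. If $uRv$, then $v \Vdash q$ because $u \Vdash \square q$, contradicting $v \Vdash \neg q$; symmetrically $vRu$ is impossible.
\item The worlds $w_1,w_2,w_3$ are pairwise inaccessible. Suppose, say, $w_1Rw_2$. By transitivity $w_2 \Vdash \square p$. But $w_2$ refutes the $p,r$-instance of $.3$, so it sees some world satisfying $\square r \land \neg p$, which contradicts $\square p$ at $w_2$. The other pairs are handled the same way.
\end{enumerate}
Together these contradict $R_2$, so $A_2$ is valid on every such frame.

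\emph{Completeness.} I would show that the canonical frame $\langle W, R\rangle$ of $S4.2 + A_2$ satisfies $R_2$. Suppose $w_0Rw_1,w_2,w_3$, the $w_i$ are pairwise inaccessible, and each $|w_i|$ is non-connected. From inaccessibility, for $i \neq j$ pick $\varphi_{ij}$ with $\square\varphi_{ij} \in w_i$ and $\varphi_{ij} \notin w_j$. Set $\pi_i = \bigwedge_{j\neq i}\varphi_{ij}$; then $\square \pi_i \in w_i$ and $\pi_i \notin w_j$ for $j \neq i$. From non-connectedness of $|w_i|$, pick incomparable $u_i, v_i \in |w_i|$ with formulas $\gamma_i, \delta_i$ such that $\square\gamma_i \in u_i$, $\gamma_i \notin v_i$, $\square\delta_i \in v_i$ and $\delta_i \notin u_i$.

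The aim is to choose substitution instances $P, Q, R$ for $p, q, r$ in $A_2$ so that every disjunct is refuted at $w_0$. This contradicts $A_2 \in w_0$, and the Truth Lemma then finishes the proof.

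\emph{Main obstacle.} The difficulty is to make one substitution serve all three disjuncts at once. The formula $Q$ must play two roles. At $w_2$ it must act as the "button" $\pi_2$. Inside $|w_1|$ it must separate $u_1$ from $v_1$, so that $u_1 \Vdash \square Q \land \neg R$ and $v_1 \Vdash \square R \land \neg Q$.

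\begin{enumerate}[label=(\alph*)]
\item First try a mixed substitution guarded by the $\pi$'s, for instance $Q := \pi_2 \lor (\square\pi_1 \land \gamma_1) \lor (\square\pi_3 \land \delta_3)$, with $P$ and $R$ defined cyclically. Then check membership in $u_i$, $v_i$ and $w_i$ using $4$ and the separating properties of the $\pi_i$.
\item If this bookkeeping fails, because the extra disjuncts spoil $\neg Q$ at $v_1$, fall back on a different route. Prove that $S4.2 + A_2$ has the finite model property by filtration through the Fine/Zakharyaschev style of rooted finite frames for $S4.2$. Then verify directly that finite rooted $S4.2$-frames validating $A_2$ satisfy $R_2$. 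In a finite frame, incomparable worlds and inaccessible clusters can be named by variables, so the valuation refuting $A_2$ is immediate. This yields completeness for the class claimed in Theorem~\ref{c1a2}.
\end{enumerate}
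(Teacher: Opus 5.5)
Your soundness half is correct. Step (ii), pairwise inaccessibility of $w_1,w_2,w_3$, is a check the paper skips. The completeness half cannot be finished by either route, because the completeness claim itself is false.

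\emph{Counterexample frame.} Let $F$ have points $w_0,w_1,w_2,w_3,a,b,c$, with $R$ the reflexive--transitive closure of $w_0Rw_i$, $w_iRa$, $w_iRb$ ($i=1,2,3$), $aRc$, $bRc$. It is reflexive, transitive and directed, since $c$ is a top. It violates $R_2$ at $w_0$: the $w_i$ are pairwise inaccessible, and each $|w_i|=\{w_i,a,b,c\}$ contains the incomparable pair $a,b$.

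Yet $F\models A_2$. A refutation of $A_2$ needs $x_1,x_2,x_3$ with $x_1\Vdash\square p$, $x_2\Vdash\square q$, $x_3\Vdash\square r$, and incomparable $y_i,z_i\in|x_i|$. Among these are $y_1\Vdash\square q\land\neg r$, $z_1\Vdash\square r\land\neg q$ and $z_2\Vdash\square r\land\neg p$. So every $x_i\in\{w_0,\dots,w_3\}$. Moreover $z_2$ excludes $x_1=w_0$, $z_1$ excludes $x_2=w_0$, and $y_1$ excludes $x_3=w_0$. Hence each $x_i$ is some $w_j$ and each pair $\{y_i,z_i\}$ is $\{a,b\}$. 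Then $z_2\in\{a,b\}\subseteq|x_1|$ forces $z_2\Vdash p$, a contradiction.

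\emph{A formula separating the two logics.} Consider
\[
\Phi=\bigvee_{i=1}^{3}\square\Bigl[\bigl(\square p_i\land\bigwedge_{j\neq i}\neg p_j\bigr)\to\bigl(\square(\square q_i\to r_i)\lor\square(\square r_i\to q_i)\bigr)\Bigr].
\]
It is valid on every frame of the stated class. The antecedents make the three witnesses pairwise inaccessible, and the failed consequents make their cones non-connected. But $\Phi$ fails at $w_0$ in $F$ under $V(p_i)=\{w_i,a,b,c\}$, $V(q_i)=\{a,c\}$, $V(r_i)=\{b,c\}$. Since $F$ validates $S4.2+A_2$, we get $\Phi\notin S4.2+A_2$.

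\emph{Consequences for your routes.} The canonical frame of $S4.2+A_2$ therefore does not satisfy $R_2$, so no choice of $P,Q,R$ in (a) can work. Route (b) is refuted outright: $F$ is a finite rooted $S4.2$-frame that validates $A_2$ but not $R_2$.

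You located the obstacle correctly, but it is a failure of correspondence, not of bookkeeping. Your $Q$ has $\pi_2$ as a disjunct and $\square\pi_2\in w_2$, so $Q$ holds throughout $|w_2|$. Thus $\neg Q\in v_1$ is impossible once $v_1\in|w_2|$, and nothing prevents that; in $F$ all three cones share $a,b$. In general, $A_2$ only excludes those violations of $R_2$ whose separating worlds can be placed outside the other cones (e.g.\ $z_2,z_3\notin|x_1|$). The paper's own completeness proof breaks at the same place: its parenthetical claim that coinciding separating pairs force $R_2$ is exactly what $F$ contradicts.

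\emph{A repair.} Your data $\pi_i,\gamma_i,\delta_i$ still do useful work. Substituted as $p_i:=\pi_i$, $q_i:=\gamma_i$, $r_i:=\delta_i$, they refute $\Phi$ at $w_0$ in the canonical model of $S4.2+\Phi$. So $S4.2+\Phi$ is complete for the stated class. What fails is $A_2$'s reuse of $p,q,r$ as both buttons and separators.
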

\begin{proof}
    For soundness, it suffices to check that the axiom $A_2$ is valid in such frames. Let $\mathbb{F} = (F, R)$ be such a frame. Suppose \emph{a contrario} that $w_0 \in F$ is a world such that it falsifies $A_2$. It follows that $(\exists w_1, w_2, w_3 \in F) (w_0Rw_1, w_2, w_3$ and $w_1 \Vdash \square p, w_2 \Vdash 
    \square q, w_3 \Vdash \square r)$, and $w_1 \not\Vdash \square (\square q \rightarrow r) \lor \square (\square r \rightarrow q)$, and $w_2, w_3$ each falsify the appropriate consequent of the implications of $A_2$. Without loss of generality, assume that in $\mathbb{F}$, $|w_1|$ is connected. Since $w_1 \Vdash \square p$ and $w_2 \Vdash \square q$, $u \Vdash \square (p \land q)$ in any $u$ such that $w_1, w_2Ru$. There is such $u$, because the frame is directed. But since directedness also forces that $w_1$ and $w_3$ have a common extension $u'$, then $u' \Vdash \square r$. Because $u, u' \in |w_1|$, we have $uRu'$ or $u'Ru$. Without loss of generality we can assume that $u'Ru$ and then we have that $w_1 \Vdash \square p \rightarrow \square(\square r \rightarrow q)$ --- a contradiction, that concludes this part of the proof.

For completeness, we will use canonical models. We only have to check that the canonical model $\mathfrak{A}_2$ for $S4.2 + A_2$ satisfies condition $R_2$. Suppose indirectly that $w_0 R w_1$ $\&$  $w_0 R w_2$ $\&$  $w_0 R w_3$ and $w_1, w_2, w_3$ are not accessible from each other, but each of $w_1, w_2, w_3$ has access to two worlds that are not accessible from each other: $w_1Rs, t$, $w_2Ru, w$, $w_3Rv, z$. But if it is so, then we have $\square \alpha, \neg \beta \in s$, $\square \beta , \neg \alpha \in t$, $\square \alpha' \neg \beta' \in u$,  $\square \beta', \neg \alpha' \in w$, $\square \alpha'',\neg \beta'' \in v$, $\square \beta'', \neg \alpha'' \in z$. But then (as long as $\{s, t\} \neq \{w, u\} \neq \{v, z\}$, but if such an identity would hold, then $R_2$ would as well) $w_0 \not\Vdash A_2$, for substitution $(\alpha\lor\alpha')/p, (\beta \lor \alpha'')/q, (\beta' \lor \beta'')/r$ --- a contradiction.
\end{proof}
Now that completeness has been established, it is natural to ask if one can limit oneself in investigations of $S4.2+A_2$ to finite models --- or, in simpler worlds, if the investigated logic has the finite model property. It turns out that it does.
\begin{theorem}\label{fmpa2}
    The logic $S4.2 + A_2$ has the finite model property.
\end{theorem}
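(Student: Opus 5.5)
We plan to prove the finite model property for $S4.2 + A_2$ by filtration, with Theorem \ref{c1a2} as the starting point. Suppose $\varphi \notin S4.2 + A_2$. Then $\varphi$ is refuted at some world $w$ of the canonical model $\mathfrak{A}_2$. By Theorem \ref{c1a2}, the frame of $\mathfrak{A}_2$ is reflexive, transitive, directed and satisfies $R_2$. First we restrict to the generated submodel $|w|$, which keeps all of these properties. Let $\Sigma$ be the set of subformulas of $\varphi$, closed under single negations. We then take the \emph{transitive} (Lemmon) filtration through $\Sigma$. Its worlds are the classes $[u]$ of worlds agreeing on $\Sigma$. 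The relation is $[u]R'[v]$ iff for every $\square\psi \in \Sigma$, $u \Vdash \square\psi$ implies $v \Vdash \square\psi \land \psi$. This gives a finite, reflexive, transitive model that still refutes $\varphi$ at $[w]$, by the standard filtration lemma.

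The real work is showing that the filtered frame is still a frame for $S4.2 + A_2$. The cleanest route avoids checking the frame condition directly. Instead we use the fact that the classes satisfy exactly the formulas characteristic of the original points. Since the filtered frame is finite, rooted, reflexive and transitive, it suffices to show two things. First, its final clusters collapse to a unique final cluster, which gives directedness. Second, it satisfies $R_2$.

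Directedness is the standard argument. The generated submodel has a unique "top": any two worlds share a successor, and in the canonical model the set of formulas $\square\psi\in\Sigma$ true at deep successors stabilizes. So every filtered world $R'$-sees the class of some world $z$ that is maximal with respect to the finitely many boxed formulas of $\Sigma$. All such maximal classes are mutually $R'$-related.

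For $R_2$, the danger is that filtration creates new incomparable points. Suppose $[w_1],[w_2],[w_3]$ are pairwise $R'$-incomparable successors of $[w_0]$, and each $|[w_i]|$ is non-connected, witnessed by $[s_i],[t_i]$. Incomparability under the Lemmon relation is witnessed by boxed formulas from $\Sigma$. For example, $[s_i]\not R'[t_i]$ means there is some $\square\alpha\in\Sigma$ true at $s_i$ and false at $t_i$ (or $\square\alpha$ true and $\alpha$ false). From these witnesses we build formulas $\alpha_i,\beta_i$ as Boolean combinations over $\Sigma$, chosen so that $\square\alpha_i$ holds at a successor of $w_i$ where $\beta_i$ fails, and vice versa. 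We then substitute disjunctions of them into $A_2$, exactly as in the completeness part of Theorem \ref{c1a2}, and obtain a failure of an instance of $A_2$ at $w_0$ in the original model. That contradicts $w_0$ being a point of the canonical model.

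The main obstacle is this last step. The filtration relation only tracks boxed formulas of $\Sigma$, so the witnesses are "modal types." We must check that the canonical-model substitution argument still goes through when $w_1, w_2, w_3$ are merely representatives of classes, and are possibly comparable in the original model. If this proves troublesome, the first fallback is to enlarge $\Sigma$ by a finite set of formulas: Boolean combinations of $\square$-subformulas, closed under the relevant $A_2$-style compounds. A more robust second fallback is to replace filtration by a selective construction, inspired by Fine's treatment of $S4.3$. That construction picks finitely many maximal points from the canonical model for each boxed subformula and keeps the original relation on them, so that $R_2$ and directedness are inherited directly from Theorem \ref{c1a2}.
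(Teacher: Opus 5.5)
\textbf{There is a genuine gap.} The step that carries the whole argument is that the filtered frame satisfies $R_2$. You do not carry it out, as you note yourself. It also cannot be carried out by any construction, because $R_2$ is not the frame condition of $A_2$.

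Let $F$ have points $w_0,w_1,w_2,w_3,a,b,e$, where $w_0$ sees everything, each $w_i$ sees $a,b,e$, and $a,b$ see $e$ (plus reflexivity). $F$ is reflexive, transitive and directed, and it violates $R_2$ at $w_0$. Yet $F\models A_2$:
\begin{itemize}
\item Write $p_1,p_2,p_3$ for $p,q,r$. A refutation of $A_2$ needs successors $x_i\Vdash\square p_i$, each seeing points where the other two variables fail.
\item If $x_iRx_j$ with $i\neq j$, then $x_j\Vdash\square p_i$, a contradiction. So the $x_i$ form an antichain, and each has a non-connected cone.
\item In $F$ this forces $\{x_1,x_2,x_3\}=\{w_1,w_2,w_3\}$. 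Hence $p,q,r$ all hold on $a,b,e$.
\item Then the witnesses in $|x_1|$ for $\square q\land\neg r$ and for $\square r\land\neg q$ must both be $x_1$ itself, which is impossible.
\end{itemize}

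Two consequences follow.
\begin{itemize}
\item Give $F$ a point-distinguishing valuation and send each point to its theory. This embeds $F$, with all accesses and non-accesses, into the canonical frame. So your starting claim that $\mathfrak{A}_2$ satisfies $R_2$ is false; the proof of Theorem~\ref{c1a2} asserts it, but wrongly.
\item The Jankov--Fine formula $\chi(F)$ is not in $S4.2+A_2$. But any reflexive transitive frame refuting it has a generated subframe mapped p-morphically onto $F$, and pulling back $w_0,\dots,w_3,a,b$ along that map gives an $R_2$-violation. So no $R_2$-frame at all refutes $\chi(F)$, which also refutes Theorem~\ref{c1a2} as stated.
\end{itemize}
Therefore, for $\varphi=\chi(F)$, none of your routes can work: not the Lemmon filtration, not an enlarged $\Sigma$, and not a selective construction aimed at $R_2$. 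The paper's own proof uses essentially the same transitive filtration and only checks that connectedness of cones survives it, so it does not supply the missing step either.

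\textbf{What would work.} You need the actual correspondent of $A_2$. Using the minimal valuation, $A_2$ fails at $u$ iff there are $x_1,x_2,x_3\in|u|$ and $y_i,z_i\in|x_i|$ with
\begin{itemize}
\item $y_1,y_2\notin|x_3|\cup|z_1|\cup|z_2|$,
\item $z_1,y_3\notin|x_2|\cup|y_1|\cup|z_3|$,
\item $z_2,z_3\notin|x_1|\cup|y_2|\cup|y_3|$.
\end{itemize}
Here $\square p$ is meant to hold at $x_1,y_2,y_3$, $\square q$ at $x_2,y_1,z_3$, and $\square r$ at $x_3,z_1,z_2$. Forbidding this configuration is a universal condition whose negative clauses are non-accesses between named points. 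Hence:
\begin{itemize}
\item Your substitution idea does show that the canonical frame satisfies it: separate each forbidden access by a formula, then take finite conjunctions and disjunctions.
\item It is inherited by every subframe with the restricted relation, and directedness is inherited by cofinal subframes.
\end{itemize}
So your second fallback is the right route once $R_2$ is replaced by this condition. Select maximal points in Fine's style, keep the original relation, and add points of the final cluster of $|w|$ to make the selection cofinal. This also makes $S4.2+A_2$ a cofinal subframe logic over $K4$, so Zakharyaschev's theorem yields the finite model property. With the Lemmon filtration the substitution argument has no grip, since $[x]R'[y]$ need not come from an actual access $xRy$. Directedness of that filtration, incidentally, is immediate: any two classes have representatives in $|w|$, and these share a successor.
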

Before turning our attention to the proof, some preliminary remarks are in order.

There is a canonical model for $S4.2+A_2$, call it $\mathfrak{M}$. Let $X \subseteq Fm_{\mathtt{P}}$ be closed under subformulas, and let $X' = \{\neg \alpha \in Fm_\mathtt{P}: \alpha \in X\}$ Consider the following relation on the members of $\mathfrak{M}$'s domain: $w \sim u \iff (\forall \alpha \in X') (w \Vdash \alpha \iff u \Vdash \alpha)$. We denote an equivalence class of a world $w$ with respect to $\sim$ by $[w]$. We consider the following model $\mathfrak{M}_{X'} = \langle M_{X'}, R_{X'}, v_{X'} \rangle$:

\begin{enumerate}
  
    \item $M_{X'} = \{[w] : w \in \mathfrak{M} \}$ 
    \item $[w]R_{X'}[u] \iff (\forall \alpha \in X')( \lozenge \alpha \in u \Rightarrow \lozenge \alpha \in w)$
    \item $v_{X'}(p) = \{[w]: p \in w\}$, for all $p \in Var \cap X$. 
\end{enumerate}
  \begin{lemat}
      $\mathfrak{M}_{X'}$ is a filtration of $\mathfrak{M}$ through $X'$.
  \end{lemat}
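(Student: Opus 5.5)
The plan is to check the standard definition of a filtration directly. Concretely, I will verify that $\mathfrak{M}_{X'}$ satisfies the usual requirements for a filtration of $\mathfrak{M}$ through the subformula-closed set $X'$ (more precisely, through $X \cup X'$, since membership of $\alpha$ and of $\neg\alpha$ determine each other in maximal consistent sets). The requirements are: (i) $M_{X'}$ is the quotient of $M$ by $\sim$; (ii) the valuation agrees on variables of $X$; (iii) the minimal condition: $wRu$ implies $[w]R_{X'}[u]$; (iv) the maximal condition: if $[w]R_{X'}[u]$ and $\square\alpha \in w$ with $\square\alpha\in X$, then $\alpha\in u$. I will then invoke the filtration theorem, proved by a routine induction on $\alpha \in X$, to get $w \Vdash \alpha \iff [w]\Vdash \alpha$. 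Clauses (i) and (ii) are immediate from the definitions. The valuation is well defined because $p \in X$ gives $\neg p \in X'$, so $\sim$-equivalent worlds agree on $p$.

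First I check that $R_{X'}$ is well defined on classes. The defining condition refers to $\lozenge\alpha$ for $\alpha\in X'$, so I must make sure that $\lozenge\alpha\in w$ is $\sim$-invariant. I will treat $\lozenge\alpha$ as $\neg\square\neg\alpha$. For $\alpha=\neg\gamma$ with $\gamma\in X$, we get $\lozenge\neg\gamma \equiv \neg\square\gamma$. So whenever $\square\gamma\in X$, the formula $\lozenge\neg\gamma$ is equivalent to $\neg\square\gamma \in X'$, and $\sim$ respects it. If $\square\gamma\notin X$ the clause is not automatically invariant. The cleanest fix is to read the condition in 2 as ranging over those $\alpha\in X'$ with $\lozenge\alpha$ equivalent to a member of $X'$, i.e.\ over $\square\gamma\in X$. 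I will state this reading explicitly, noting that it is the intended one.

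For the minimal condition, suppose $wRu$ in the canonical model and $\lozenge\alpha\in u$. By transitivity (axiom $4$ holds in $S4.2+A_2$) we have $\lozenge\lozenge\alpha \to \lozenge\alpha$, and $wRu$ gives $\lozenge\lozenge\alpha \in w$ by the canonical definition of $R$. Hence $\lozenge\alpha\in w$, i.e.\ $[w]R_{X'}[u]$.

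For the maximal condition, suppose $[w]R_{X'}[u]$ and $\square\gamma\in w$ with $\square\gamma\in X$. Assume instead $\gamma\notin u$. Then $\neg\gamma\in u$, and by $T$ also $\lozenge\neg\gamma\in u$, with $\neg\gamma\in X'$. The definition of $R_{X'}$ then yields $\lozenge\neg\gamma\in w$, i.e.\ $\neg\square\gamma\in w$, contradicting $\square\gamma\in w$. This use of reflexivity in $u$ is exactly why the relation is defined via $\lozenge$ rather than $\square$; it is a transitive-style filtration.

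The main obstacle is the well-definedness issue above, together with making sure the quantifier ranges only over formulas whose diamond is controlled by $X'$. Once that is fixed, the remaining steps are the usual canonical-model manipulations. Finally I will note that $R_{X'}$ is reflexive and transitive directly from its form, an implication between membership sets. Finiteness of $M_{X'}$ for finite $X$ follows since classes are determined by subsets of $X'$. These facts will be needed afterwards for Theorem~\ref{fmpa2}, but not for the lemma itself.
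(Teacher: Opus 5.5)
Your proposal is correct and follows the paper's argument: the minimal condition comes from transitivity (the dual of axiom $4$ applied through the canonical relation), and the maximal condition comes from reflexivity (axiom $T$ in $u$), which are exactly the two facts the paper cites. You also observe that $R_{X'}$, as literally defined, depends on the choice of representatives when $\gamma \in X$ but $\square\gamma \notin X$; this is accurate and the paper passes over it, and restricting the quantifier to $\square\gamma \in X$ (the usual Lemmon-style transitive filtration) is the right repair and costs nothing elsewhere in your argument.
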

  \begin{proof}
      Since cases 1. and 3. are in line with the standard filtration definition, it suffices to check that $R_{X'}$ validates the following conditions:
      \begin{enumerate}[label=(\arabic*)]
          \item $wRu \Rightarrow [w]R_{X'}[u]$,
          \item $[w]R_{X'}[u] \Rightarrow (\forall \alpha \in X')(\alpha \in u \Rightarrow \lozenge \alpha \in w)$.
      \end{enumerate}
      They are however immediate from reflexivity and transitivity of $\mathfrak{M}$ (note that this proves that such a filtration is only a filtration for reflexive and transitive models, but our canonical model is like that).
  \end{proof}

\begin{proof}[Proof of theorem \ref{fmpa2}]
    Since $S4.2 + A_2$ is an extension of the logic \textbf{K}, the filtration theorem occurs. We just need to check that the filtered model $\mathfrak{M}_{X'}$ is a model of $S4.2 + A_2$. Checking the axioms characteristic for S4.2 is easy, we will only check the validity of $A_2$ here. For that, we need to check that it preserves connectedness.
    
    Suppose that for some world $w \in \mathfrak{M}$, $|w|$ is connected. Let us assume by contradiction, that there are some $u, v \in |w|$ such that $\neg [u] R [v]$ and $\neg [v] R [u]$. Without loss of generality, let us assume, that $\alpha, \beta \in X'$. The world $w$ satisfies the connectedness axiom and is a maximally consistent world, so it can be assumed without loss of generality that $\square (\square \alpha \rightarrow \beta)$. It follows that $\square \alpha \rightarrow \beta \in u, v$, so $\neg \square \alpha \lor \beta \in u,v$. (i) Suppose that $\beta \in u$. It follows that $\lozenge \beta \in u$, hence $\neg \lozenge \beta \in v$. This means that $\neg \square \alpha \in v$, so $\lozenge \neg \alpha \in v$, and it follows from the lack of accessibility of $v$ and $u$ that $\square \neg \alpha \in v$. But $\square (\square \neg \alpha \rightarrow \beta) \lor \square (\square \beta \rightarrow\neg \alpha) \in w$, because of the fact that $\neg \alpha \in X'$. However, $\square \neg \alpha \rightarrow \beta \notin v$ and $\square \beta \rightarrow \neg \alpha \notin u$. (ii) Supose that $\neg \square \alpha \in u$. Then similar argument leads to $\square \neg \alpha \rightarrow \beta \notin u$ and $\square\beta \rightarrow \neg \alpha \notin v$, and this contradiction concludes the proof.
    \end{proof}



\begin{wn}
The logic $S4.2 + A_2$ is decidable.
\end{wn}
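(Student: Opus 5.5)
The corollary should follow from Theorem \ref{fmpa2} by the standard Harrop-style argument. The logic $S4.2 + A_2$ is finitely axiomatizable: it is the smallest normal logic containing $K$, $T$, $4$, $.2$ and $A_2$, closed under modus ponens, uniform substitution and necessitation. Together with the finite model property from Theorem \ref{fmpa2}, finite axiomatizability yields decidability. I would run two semi-decision procedures in parallel.

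The first procedure enumerates all derivations from the finitely many axioms using the three rules. This recursively enumerates the theorems of $S4.2 + A_2$, so if $\varphi$ is a theorem it is eventually found.

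The second procedure enumerates all finite Kripke frames, say up to isomorphism, listed by size. For each frame $F$ it does two checks.
\begin{itemize}
\item It checks whether $F$ validates the five axioms. Validity of a fixed formula on a finite frame is decidable: one quantifies over the finitely many valuations of its variables (subsets of a finite set) and evaluates at every world.
\item If $F$ is a frame of the logic, it checks whether $\varphi$ fails at some world under some valuation.
\end{itemize}
Because every rule of the logic preserves frame validity, a frame validating the axioms validates all theorems. So finding such a frame refuting $\varphi$ shows $\varphi \notin S4.2 + A_2$.

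Completeness of the search comes from the finite model property. If $\varphi$ is not a theorem, there is a finite model of $S4.2 + A_2$ falsifying $\varphi$, so the second procedure eventually halts. Exactly one of the two procedures terminates on any input, which gives a decision procedure.

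The only delicate point is the gap between a finite \emph{model} and a finite \emph{frame}. Theorem \ref{fmpa2} provides the filtered model $\mathfrak{M}_{X'}$, a model in which all theorems hold, not necessarily a frame validating the axioms. There are two ways to handle this.
\begin{itemize}
\item The cleaner route is to enumerate finite models instead of frames. The second procedure then checks that the axioms hold under every substitution instance. That set of instances is infinite, but for a finite model it can be reduced to checking every assignment of subsets of the domain to the variables, i.e.\ frame validity again.
\item Alternatively, I would verify that the proof of Theorem \ref{fmpa2} actually shows that the underlying frame of $\mathfrak{M}_{X'}$ is reflexive, transitive, directed and satisfies $R_2$. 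Theorem \ref{c1a2} then makes it a frame for the logic, so the frame-based enumeration suffices.
\end{itemize}
I would take the second route, since the theorem's proof argues structurally via preservation of connectedness, and I expect the frame properties to go through.

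Filtration also gives an explicit bound of $2^{|X'|}$ worlds, which could replace the unbounded search by a bounded one. This is not needed for decidability itself.
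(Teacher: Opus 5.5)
Your main line is the one the paper intends. The paper gives no separate proof of this corollary; it is meant to follow from Theorem~\ref{fmpa2} and finite axiomatizability by exactly the Harrop-style argument you describe. You are also right that a finite \emph{model} of $S4.2+A_2$ does not automatically give a finite \emph{frame} for it. However, neither of your two ways of bridging this works as stated.

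In the first route, truth of all substitution instances of an axiom in a finite model $(F,V)$ is equivalent to truth under all valuations whose values are \emph{definable} sets $V(\psi)$, not arbitrary subsets. Frame validity implies this but is in general strictly stronger. So your reduction holds in one direction only, and the search is complete only if the finite frame property is already known.

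The second route relies on properties the paper's filtration does not have. Take $X=\{p,\neg p,\square p,\square\neg p\}$. A canonical world $w$ containing $\neg\square p$ and $\neg\square\neg p$ has $[w]R_{X'}[u]$ and $[w]R_{X'}[v]$ for any $u\ni\square p$ and $v\ni\square\neg p$. These two classes have no common successor, so $\mathfrak{M}_{X'}$ is not even directed. Passing to a point-generated submodel fixes directedness but not $R_2$. The reason is that $[w]R_{X'}[x]$ does not require $wRx$, while the paper's argument only concerns $R$-successors.

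Here is a concrete failure of $R_2$. Let $X=\{p,q,s,\square p,\square q,\square s\}$, and take a root below $w_1,w_2,w_3$ whose boxed $X$-formulas are $\{\square p\}$, $\{\square q\}$, $\{\square s\}$ respectively. Let $t$ be a top world forcing $p,q,s$, and set $|w_1|=\{w_1,t\}$. Let $w_2$ see worlds whose boxed $X$-formulas are exactly $\{\square p,\square q\}$ and $\{\square q,\square s\}$, and let $w_3$ see one with exactly $\{\square p,\square s\}$, all below $t$. This frame satisfies $R_2$, since only the root and $w_2$ have non-connected upsets. Yet in the filtration $[w_1],[w_2],[w_3]$ form an antichain, and each $|[w_i]|$ contains two of those three pairwise incomparable classes.

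What closes the gap uses only the statement of Theorem~\ref{fmpa2}: for a normal logic, the finite model property yields the finite frame property.
\begin{enumerate}
\item Given a finite model of $S4.2+A_2$ refuting $\varphi$, identify modally equivalent points. On a finite model this is a bisimulation, so the quotient is still a finite model of the logic refuting $\varphi$.
\item In the quotient each point is defined by a formula, so every subset of the domain is $V(\psi)$ for some $\psi$.
\item Hence every valuation on the quotient frame is induced by a substitution. Closure of $S4.2+A_2$ under substitution then makes that frame validate the logic, so your frame enumeration is complete.
\end{enumerate}
Alternatively, you can repair the first route by testing the axioms only under valuations into the definable subsets of the finite model. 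These form a finite family, computable as the closure of the sets $V(p)$, for the finitely many variables in play, under the Boolean operations and $Y\mapsto\{x : |x|\subseteq Y\}$.
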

As mentioned above, the axiom $A_2$ directly violates the possibility of consistency for the logic $S4.2 +A_2$ with a set of 3 or more independent buttons. However, one can opt for a different notion of independence. Let us call a finite set of buttons $X$ \emph{weakly independent} in $w$ iff for every $A \subseteq X$, $w \Vdash \lozenge (\bigwedge_{\alpha \in A} \alpha \land \bigwedge_{\beta \in X \setminus A} \beta)$. 

\begin{fact}
    Logic $S4.2 +A_2$ is consistent with arbitrarily large families of weakly independent buttons.
\end{fact}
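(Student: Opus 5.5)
The plan is to exhibit, for every $n$, a single finite frame that validates $S4.2+A_2$ and carries a valuation with $n$ weakly independent buttons. To check that the frame validates the logic, I will use the soundness half of Theorem~\ref{c1a2}, i.e.\ verify that the frame is reflexive, transitive, directed and satisfies $R_2$. I read the defining formula of weak independence as intended to be $w \Vdash \lozenge(\bigwedge_{\alpha\in A}\square\alpha \land \bigwedge_{\beta\in X\setminus A}\neg\square\beta)$. Under the literal reading, $\lozenge(\bigwedge_{\alpha\in A}\alpha\land\bigwedge_{\beta\in X\setminus A}\beta)$, the claim is immediate: the top world constructed below makes every button true, so it witnesses all the diamonds.

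\textbf{Construction.} Fix $n$ and the variables $p_1,\dots,p_n$. Let the frame have three layers:
\begin{itemize}
\item a root $w$;
\item a middle layer of worlds $u_A$, one for each $A\subseteq\{1,\dots,n\}$;
\item a top world $t$.
\end{itemize}
Take $R$ to be the reflexive closure of $\{(w,u_A),(u_A,t),(w,t)\}$. Then $R$ is reflexive and transitive. It is directed, since $t$ is a common successor of any two worlds. For $R_2$: any three pairwise inaccessible worlds must be three distinct middle worlds $u_A,u_B,u_C$. Each of these has $|u_A|=\{u_A,t\}$, which is connected, so $R_2$ holds. By Theorem~\ref{c1a2}, every formula of $S4.2+A_2$ is valid on this frame.

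\textbf{Valuation.} Put $v(p_i)=\{t\}\cup\{u_A : i\in A\}$. We then have:
\begin{itemize}
\item $u_A\Vdash\square p_i$ iff $i\in A$, because the successors of $u_A$ are exactly $u_A$ and $t$.
\item $t\Vdash\square p_i$ for all $i$, so $w\Vdash\lozenge\square p_i$ and each $p_i$ is a button at $w$.
\item $w\Vdash\neg\square p_i$, since $w\notin v(p_i)$, so none of the buttons is pushed at $w$.
\end{itemize}
For each $A\subseteq\{1,\dots,n\}$, the world $u_A$ satisfies $\bigwedge_{i\in A}\square p_i\land\bigwedge_{i\notin A}\neg\square p_i$. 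Since $wRu_A$, the family $\{p_1,\dots,p_n\}$ is weakly independent at $w$. Because $n$ was arbitrary, this establishes the Fact.

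\textbf{Main point to get right.} The step that needs care is checking $R_2$ together with directedness, i.e.\ that the antichain of size $2^n$ does not violate $A_2$. The device that makes this work is keeping each middle world's cone a two-element chain, which makes it connected. This is also what separates weak independence from full independence. Full independence would require $\square\lozenge$ of each combination, so the combinations would have to be reachable again from later worlds. In this frame that fails: from $t$ no world other than $t$ itself is accessible.
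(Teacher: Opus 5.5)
Your proof is correct and is essentially the paper's own argument. The paper also takes the frame $M_{2^n}$ (a root, $2^n$ pairwise incomparable middle worlds and a top), gives the middle worlds distinct subsets of the buttons, and gets validity of $A_2$ from the connectedness of every cone above the root. You add the checks the paper leaves implicit: directedness, $R_2$ via the soundness half of Theorem~\ref{c1a2}, and the repair of the evident typo in the definition of weak independence.
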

\begin{proof}
    For such a set $X$ of size $n$ take a frame isomorphic to a lattice $M_{2^n}$ (least element, greatest element, and $2^n$ pairwise incomparable elements in the middle). Let the valuation assign to different "middle" worlds different subsets of $X$ (note that it is not the most efficient way of constructing such frames --- compare with figure 2). This is of course a $S4.2+A_2$-frame, since all worlds above the initial world satisfy connectedness.
\end{proof}
\begin{figure}[ht]
\centering
\begin{tikzpicture}[scale=1.2]

\node (0) at (0,0) {$\varnothing$};

\node (a)  at (-5,3) {$\{a\}$};
\node (b)  at (-3,3) {$\{b\}$};
\node (c)  at (-1,3) {$\{c\}$};
\node (ab) at ( 1,3) {$\{a,b\}$};
\node (ac) at ( 3,3) {$\{a,c\}$};
\node (bc) at ( 5,3) {$\{b,c\}$};

\node (1) at (0,6) {$\{a,b,c\}$};

\draw (0) -- (a);
\draw (0) -- (b);
\draw (0) -- (c);
\draw (0) -- (ab);
\draw (0) -- (ac);
\draw (0) -- (bc);

\draw (a)  -- (1);
\draw (b)  -- (1);
\draw (c)  -- (1);
\draw (ab) -- (1);
\draw (ac) -- (1);
\draw (bc) -- (1);

\end{tikzpicture}
\caption{A model satisfying 3 weakly independent buttons}
\label{fig:M6}
\end{figure}
More different notions of independence can be constructed, and hence different logics are consistent with arbitrary sets of differently-independent buttons. This is of course consistent with satisfying $B_2$ (or any $B_n$, for that matter).

Now, let us move on to logic $S4.2+I_2$, which differs from the former logic in this aspect. Let us state some facts about this logic as well. Consider the following condition:
\begin{center}
    $R'_2$: If $w_0Rw_1, w_2, w_3$ then $w_iRw_j$, for some distinct $i, j \in \{1, 2, 3\}$.
\end{center}
This can be stated equivalently as:
\begin{center}
    $R'_2$: There are no $R$-antichains of cardinality $>2$ with a common predecessor.
\end{center}
\begin{theorem}\label{cli2}
    $S4.2 +I_2$ is complete with respect to the class of Kripke frames that are reflexive, transitive, directed and satisfy condition $R'_2$.
\end{theorem}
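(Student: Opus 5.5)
We prove soundness by checking validity of $I_2$ on such frames, and completeness by showing that the canonical frame of $S4.2+I_2$ satisfies $R'_2$. Reflexivity, transitivity and directedness of the canonical frame are already given by the canonicity of $T$, $4$ and $.2$.

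\emph{Soundness.} Let $\mathbb{F}=(F,R)$ be reflexive, transitive, directed and satisfy $R'_2$, and suppose $w_0\not\Vdash I_2$ under some valuation. Then each disjunct fails, so there are $w_1,w_2,w_3$ with $w_0Rw_1,w_2,w_3$ and
\[
w_1\Vdash \square p\land\neg q\land\neg r,\qquad w_2\Vdash\square q\land\neg p\land\neg r,\qquad w_3\Vdash \square r\land \neg p\land\neg q .
\]
By $R'_2$ we have $w_iRw_j$ for some distinct $i,j$. Say $w_1Rw_2$; then $w_2\Vdash p$ because $w_1\Vdash\square p$, which contradicts $w_2\Vdash\neg p$. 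The other five cases are symmetric, since each of $w_1,w_2,w_3$ necessitates its own letter and falsifies the other two. Directedness is not even needed for validity of $I_2$ itself.

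\emph{Completeness.} Suppose the canonical frame violates $R'_2$: there are maximal consistent sets $w_0Rw_1,w_2,w_3$ with no $w_iRw_j$ for $i\neq j$. For each ordered pair $i\neq j$, $\neg w_iRw_j$ yields a formula $\gamma_{ij}$ with $\square\gamma_{ij}\in w_i$ and $\gamma_{ij}\notin w_j$. Put $\delta_i=\bigwedge_{j\neq i}\gamma_{ij}$. Then $\square\delta_i\in w_i$ (using $K$), and $\delta_i\notin w_j$ for each $j\neq i$, because $\gamma_{ij}\notin w_j$.

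We would like $w_j\Vdash\neg\delta_i\land\neg\delta_k$ while $w_j\Vdash\square\delta_j$, so the disjunct $\square(\square\delta_j\to\delta_i\lor\delta_k)$ fails at $w_0$ for the instance $p=\delta_1$, $q=\delta_2$, $r=\delta_3$. This works directly, since $\neg\delta_i\in w_j$ and $\neg\delta_k\in w_j$ for $i,k\neq j$. Hence $w_j$ witnesses $\Diamond(\square\delta_j\land\neg\delta_i\land\neg\delta_k)\in w_0$ for each $j$, so the substitution instance of $I_2$ is not in $w_0$. This contradicts $w_0$ containing the logic. Therefore the canonical frame satisfies $R'_2$, and the usual canonical model argument gives completeness.

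The main obstacle is the bookkeeping in completeness: each $\delta_i$ must be necessary at $w_i$ and false at both other worlds simultaneously. Conjoining the two separating formulas handles this, because a conjunction fails wherever either conjunct fails. Unlike the proof of Theorem~\ref{c1a2}, no degenerate coincidence cases arise here, since the three worlds are pairwise incomparable and hence distinct by reflexivity.
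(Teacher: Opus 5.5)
Your proof is correct and follows the paper's argument. Soundness uses $R'_2$ to make one of the three witnessing worlds see another, and completeness shows the canonical frame satisfies $R'_2$ because otherwise a substitution instance of $I_2$ fails at $w_0$. Your conjunction $\delta_i=\bigwedge_{j\neq i}\gamma_{ij}$ makes explicit a step the paper only asserts: that one formula necessary at $w_i$ can be chosen false at both other worlds.
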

\begin{proof}
For soundness suppose that $\mathbb{F}$ is a frame satisfying the above conditions and $w \not\Vdash I_2$. That means there are worlds $u, v, s$ such that $wRu, v, s$ and $u \Vdash \square p \land \neg q \land \neg r$, $v \Vdash \square q \land \neg r \land \neg p$ and $s \Vdash \square r \land \neg p \land \neg q$. However, one of them accesses at least one of the other, and so $\square \alpha$ is valid where $\neg \alpha$ is valid, for some $\alpha \in \{p, q, r\}$, which is a contradiction.

For completeness we will show that the canonical model satisfies $R'_2$ (other conditions are obvious). Suppose that there are worlds $w_0, w_1, w_2, w_3$ and $w_0Rw_1, w_2, w_3$. Would $w_1, w_2, w_3$ be mutually inaccessible, there would be formulas $\alpha, \beta, \gamma$ such that $w_1 \Vdash \square \alpha$ yet $w_2, w_3 \not\Vdash \alpha$, $w_2 \Vdash \square \beta$, but $w_1, w_3 \not\Vdash \beta$ and $w_3 \Vdash \square \gamma$ while $w_1, w_2 \not\Vdash \gamma$. But without loss of generality because $w_0$ is a maximally consistent set, $w_0 \Vdash \square(\square \alpha \rightarrow \gamma \lor \beta)$, so either $w_1 \Vdash \beta$, or $w_1 \Vdash \gamma$, a contradiction. This concludes the proof.
\end{proof}
Similarly as in the logic $S4.2 +A_2$, completeness theorem can be limited to finite models. 

\begin{theorem}\label{fmpi2}
    $S4.2+I_2$ has the finite model property.
\end{theorem}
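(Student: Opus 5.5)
We follow the strategy used for Theorem \ref{fmpa2}. Take the canonical model $\mathfrak{M}$ of $S4.2+I_2$. By Theorem \ref{cli2} it is reflexive, transitive, directed and satisfies $R'_2$. Given a non-theorem $\varphi$, let $X$ be the set of subformulas of $\varphi$ and form the filtration $\mathfrak{M}_{X'}$ through $X'$, with the transitive relation $[w]R_{X'}[u] \iff (\forall \alpha \in X')(\lozenge\alpha \in u \Rightarrow \lozenge\alpha \in w)$. By the Lemma, this is a filtration. It is finite, since $X'$ is finite, and by the filtration theorem it falsifies $\varphi$ at some world. It then remains to check that the underlying frame of $\mathfrak{M}_{X'}$ is an $S4.2+I_2$-frame. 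By Theorem \ref{cli2}, this amounts to checking reflexivity, transitivity, directedness and $R'_2$. Reflexivity and transitivity are immediate from the definition of $R_{X'}$. Directedness should follow as in the $S4.2$ case: if $[w]R_{X'}[u],[v]$, take the common successor supplied by directedness of $\mathfrak{M}$, using that $\lozenge\square$-formulas propagate. If this is delicate, we may first pass to the generated submodel of the point refuting $\varphi$, which has a cluster of final worlds.

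The main step, and the expected obstacle, is $R'_2$. Suppose $[w_0]R_{X'}[w_1],[w_2],[w_3]$ and the $[w_i]$ are pairwise $R_{X'}$-incomparable. Non-access $\neg [w_i]R_{X'}[w_j]$ gives a formula $\alpha_{ij} \in X'$ with $\lozenge\alpha_{ij} \in w_j$ and $\neg\lozenge\alpha_{ij} \in w_i$, that is, $\square\neg\alpha_{ij} \in w_i$. Set $\gamma_i = \neg\alpha_{ji}\lor\dots$ more precisely, let $\gamma_1 = \bigwedge_{j\neq 1} \neg\alpha_{1j}$. Then $\square\gamma_1 \in w_1$, while $\lozenge\neg\gamma_1$, in fact $\lozenge\alpha_{1j}$, holds in $w_j$ for $j \neq 1$; define $\gamma_2, \gamma_3$ symmetrically.

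The difficulty is that $I_2$ requires the other worlds to satisfy $\neg\gamma_i$ \emph{at themselves}, whereas we only know that they \emph{see} $\neg\gamma_i$. Two repairs are available:
\begin{enumerate}
\item Move to successors. Each $w_j$ has a successor refuting $\gamma_i$ for both $i \ne j$, obtained by going to a world where $\square$ of everything settles; this uses directedness together with the fact that $\square\gamma_j$ persists upward. If we want a genuine antichain, however, we must ensure these successors remain incomparable, and that is not obvious.
\item Use a substitution instance whose disjuncts carry $\lozenge$. Instantiate $I_2$ with $p := \square\gamma_1$ and the others analogously, or better with the instance $\square(\square\square\gamma_1 \to \lozenge\neg\gamma_2\dots)$. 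Concretely, apply $I_2$ in $w_0$ to $p=\square\gamma_1$, $q=\square\gamma_2$, $r=\square\gamma_3$. We obtain, say, $\square(\square\square\gamma_1 \to \square\gamma_2 \lor \square\gamma_3) \in w_0$, so $w_1 \Vdash \square\gamma_2 \lor \square\gamma_3$. But $w_1$ sees $\alpha_{21}$, which falsifies $\gamma_2$, and similarly for $\gamma_3$, which is a contradiction.
\end{enumerate}
The second route looks clean, and we would adopt it. The final bookkeeping to verify is that the conjunctions $\gamma_i$ are legitimate, since $I_2$ is closed under substitution and the $\gamma_i$ need not lie in $X'$. This holds because the argument is carried out in the canonical model, where arbitrary substitution instances are available.

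Consequently $\mathfrak{M}_{X'}$ is a finite $S4.2+I_2$-model refuting $\varphi$, which establishes the finite model property. Decidability then follows as in the Corollary after Theorem \ref{fmpa2}, since the logic is finitely axiomatized.
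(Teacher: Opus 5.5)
\textbf{The gap.} It lies in the step ``apply $I_2$ in $w_0$ \dots\ so $w_1\Vdash\square\gamma_2\lor\square\gamma_3$''. To pass from $\square(\square\square\gamma_1\to\square\gamma_2\lor\square\gamma_3)\in w_0$ to a statement about $w_1$, you need $w_0Rw_1$ in $\mathfrak{M}$. Your hypothesis is only $[w_0]R_{X'}[w_1]$, and $R_{X'}$ compares only the formulas $\lozenge\alpha$ with $\alpha\in X'$. The boxed instance of $I_2$ is not of that form, so nothing about it travels along $R_{X'}$. Your closing remark that all substitution instances are available in the canonical model is true but beside the point. Every world contains them, but their $\square$-consequences reach only \emph{canonical} successors. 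Your directedness step has the same defect.

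This is a real failure, not a presentational one. Take $\varphi=\square p\lor\square q\lor\square r$. Let $w_1,w_2,w_3$ be the theories of reflexive singletons in which exactly $p$, exactly $q$, exactly $r$ holds. Let $w_0$ be the theory of the root of a two-element reflexive chain whose root falsifies all of $p,q,r$ and whose top verifies them. These are $S4.3$-models, so all four sets are points of $\mathfrak{M}$. The set $w_0$ contains no $\square\beta$ with $\beta\in X$, hence $[w_0]R_{X'}[w_i]$ for $i=1,2,3$. On the other hand, $\lozenge\neg p\in w_2,w_3$ but $\lozenge\neg p\notin w_1$, and symmetrically for $q$ and $r$, so the $[w_i]$ are pairwise incomparable. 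Thus $\mathfrak{M}_{X'}$ violates $R'_2$, and indeed $I_2$ itself fails at $[w_0]$.

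\textbf{The repair.} Make your ``optional'' generated-submodel step mandatory. Let $s$ be a point of $\mathfrak{M}$ refuting $\varphi$, and filtrate the submodel $\mathfrak{M}^s$ generated by $s$. Every representative then satisfies $sRw_i$ in $\mathfrak{M}$. So instantiate $I_2$ at $s$ rather than at $w_0$, using your substitution of $\square\gamma_1,\square\gamma_2,\square\gamma_3$. Your computation then yields the contradiction verbatim: $\square\square\gamma_1\in w_1$ by axiom $4$, while $\lozenge\alpha_{21},\lozenge\alpha_{31}\in w_1$ refute $\square\gamma_2$ and $\square\gamma_3$. It even shows that the filtered frame has no three-element antichain at all. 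Directedness is fixed the same way: any two points of $\mathfrak{M}^s$ have a common $R$-successor because they share the predecessor $s$, and $uRz$ implies $[u]R_{X'}[z]$.

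\textbf{Comparison with the paper.} The paper uses the largest filtration and asserts that $R'_2$ is preserved ``since $wRu\Rightarrow[w]R[u]$''. That implication only pushes relations forward, whereas the hypothesis $[w_0]R[w_i]$ must be pulled back to $\mathfrak{M}$. Moreover, the largest filtration need not be transitive. So the paper's one-line argument faces the same obstacle. Your device of substituting boxed formulas, which turns \emph{seeing} $\neg\gamma_i$ into \emph{failing} $\square\gamma_i$, is what the argument actually needs, once it is anchored at the root of a generated submodel.

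\textbf{A minor inherited point.} $R_{X'}$ refers to $\lozenge\alpha$ for $\alpha\in X'$, which the $\sim$-class need not determine, so $R_{X'}$ need not be well defined on classes. The Lemmon relation, $\square\beta\in w\Rightarrow\square\beta\in u$ for all $\square\beta\in X$, avoids this, and your argument works unchanged with it.
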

\begin{proof}
    In order to prove this it suffices to use a standard filtration (through arbitrary set $X$ closed under subformulas), namely the \emph{largest} filtration, where $[w]R_X[u] \iff (\forall \alpha \in X)(\square \alpha \in w \Rightarrow \alpha \in u)$. This filtration preserves reflexivity, transitivity and directedness. Moreover, it preserves $R'_2$, as from the fact that in filtrations it is the case that $wRu \Rightarrow[w]R[u]$. 
\end{proof}
\begin{wn}
    The logic $S4.2+I_2$ is decidable.
\end{wn}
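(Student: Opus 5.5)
The corollary follows from two earlier facts: $S4.2+I_2$ has the finite model property (Theorem \ref{fmpi2}), and it is finitely axiomatizable. Together these give decidability by Harrop's theorem. I would present the argument as the two classical semi-decision procedures run in parallel.

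\emph{Theorems.} $S4.2+I_2$ is the smallest normal modal logic containing $K$, $T$, $4$, $.2$ and $I_2$. Its rules are modus ponens, necessitation and uniform substitution. All of these are effectively checkable, so the set of theorems is recursively enumerable: we enumerate all finite derivations and list their conclusions.

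\emph{Non-theorems.} By Theorem \ref{fmpi2}, every non-theorem $\varphi$ is refuted in some finite model of $S4.2+I_2$. The filtration in its proof shows more: $\varphi$ is refuted in a model with at most $2^{|X|}$ worlds, where $X$ is the set of subformulas of $\varphi$. That model's underlying frame validates $S4.2+I_2$. This follows from the fact that the largest filtration preserves reflexivity, transitivity, directedness and $R'_2$, together with the soundness half of Theorem \ref{cli2}.

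So we can enumerate all finite frames of size at most $2^{|X|}$ and, for each one, decide membership in the class. Reflexivity, transitivity, directedness and $R'_2$ are first-order conditions, so they can be checked by brute force on a finite frame. Alternatively, validity of the finitely many axiom schemes on a finite frame can be checked by running over all valuations of their variables. For each admissible frame we then check all valuations of the variables of $\varphi$ and all worlds, looking for a refutation. This makes the complement recursively enumerable, and in fact decidable outright because of the size bound.

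Combining the two procedures, membership in $S4.2+I_2$ is decidable. The only point needing care is the bound on the size of the countermodel, and with it the fact that it is the frame, not merely the model, that validates the logic. If the paper's finite model property is read only as "finite models", I would note that the filtered frame itself satisfies the frame conditions of Theorem \ref{cli2}, so it validates $I_2$ and the $S4.2$ axioms. That repairs any gap, and no other obstacle arises.
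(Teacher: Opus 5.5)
Your skeleton is the one the paper intends. The corollary is meant to follow from Theorem~\ref{fmpi2} plus finite axiomatizability (Harrop's theorem), and your two parallel semi-decision procedures are the standard way to spell that out.

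\textbf{The gap.} The step you single out as the only delicate one rests on a false claim. You obtain a countermodel with at most $2^{|X|}$ points whose \emph{frame} validates $S4.2+I_2$ by asserting that the largest filtration preserves reflexivity, transitivity, directedness and $R'_2$. That assertion, taken from the proof of Theorem~\ref{fmpi2}, fails for transitivity. Take points $a\Vdash p\land\square p$, $b\Vdash p\land\neg\square p$ and $c\Vdash\neg p$; all three kinds occur, even inside a single point-generated submodel of the canonical model. Filter through $X=\{p,\square p\}$. Then:
\begin{itemize}
\item $[a]R_X[b]$, because $p$ holds at $b$;
\item $[b]R_X[c]$, vacuously, because $\square p$ fails at $b$;
\item but not $[a]R_X[c]$, because $\square p$ holds at $a$ and $p$ fails at $c$.
\end{itemize}
So the filtered frame is not transitive, and in the filtered model $\square p\rightarrow\square\square p$ fails at $[a]$. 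Your fallback, that the filtered frame satisfies the conditions of Theorem~\ref{cli2}, relies on the same false claim. The bound and frame validity are therefore unproved as written.

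\textbf{The repair.} The conclusion survives with a different filtration.
\begin{itemize}
\item Take the submodel of the canonical model generated by a point $w_0$ refuting $\varphi$.
\item Filter it through $X$ with the transitive filtration: $[x]R^t[y]$ iff, for every $\square\alpha\in X$, $x\Vdash\square\alpha$ implies $y\Vdash\square\alpha$.
\item This is a filtration (using $T$ and $4$). It is reflexive and transitive, and it has at most $2^{|X|}$ points.
\item Any two points have a common $R^t$-successor, because their representatives have a common $R$-successor. Hence $.2$ is valid on the frame.
\item Suppose $[w_1],[w_2],[w_3]$ were pairwise $R^t$-incomparable. Choose $\square\alpha_{ij}\in X$ with $w_i\Vdash\square\alpha_{ij}$ and $w_j\not\Vdash\square\alpha_{ij}$, and set $\pi_i=\bigwedge_{j\neq i}\square\alpha_{ij}$. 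Then $w_i\Vdash\square\pi_i\land\bigwedge_{j\neq i}\neg\pi_j$. So $w_0$ refutes the instance of $I_2$ obtained by substituting $\pi_1,\pi_2,\pi_3$, which is impossible. Hence the frame has no three-element antichain and validates $I_2$.
\end{itemize}
With this, your enumeration argument goes through.

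Alternatively, you can drop the bound and pass from a finite model to a finite frame by identifying points that satisfy the same formulas. In the quotient every subset is definable, so closure under substitution makes its frame validate the logic. But that route presupposes Theorem~\ref{fmpi2} itself, whose printed proof has the same gap. The transitive filtration is the honest way to obtain both the finite model property and your size bound.
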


This logic differs from the previous one, since there can be no rational notion of arbitrarily large sets of in-some-sense-independent buttons:
\begin{fact}
    If $B = \{\beta_1, \dots, \beta _n\}$ is a set of buttons, then if $L, P \subseteq B$ are nonempty and such that $P \cap L = \varnothing $, then there is a model of $S4.2+I_2$ satisfying $\lozenge (\bigwedge_{\alpha \in P}\alpha \land \bigwedge_{\gamma \in L}\neg \gamma) \land \lozenge (\bigwedge_{\alpha' \in L}\alpha' \land \bigwedge_{\gamma' \in P}\neg \gamma')$. However, for any $n$ distinct subsets of $B$, for $n>2$, there is no such model.
\end{fact}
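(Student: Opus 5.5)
The statement has two halves, and I would treat them separately: a positive construction for two disjoint sets, and an impossibility argument for three or more. Throughout, the relevant reading of the formulas is the one in which the buttons in $P$ are \emph{pushed}, i.e.\ $\bigwedge_{\alpha\in P}\square\alpha$. Under the literal reading with bare $\alpha$ the second half seems false: a chain $w\,R\,a\,R\,b\,R\,c$ is an $S4.2+I_2$-frame, the buttons are not persistent there, and $a,b,c$ can be given arbitrary patterns of truth values. So I would prove the statement for the pushed reading.

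\textbf{Existence.} I would reuse the four-element Boolean lattice frame of Figure~\ref{fig:boole4}, with root $w$, incomparable middle worlds $u,v$, and a top world $t$. It is reflexive, transitive and directed. Every antichain in it has at most two elements, so it satisfies $R'_2$, and by the soundness part of Theorem~\ref{cli2} it is an $S4.2+I_2$-frame. For the valuation, make every $\beta\in B$ true at $t$. At $u$, make the members of $P$ true and those of $L$ false; at $v$, do the reverse. The values at $w$ and of the remaining buttons are arbitrary. Each $\beta$ is then a button at $w$, witnessed by $t$, which satisfies $\square\beta$. Since $u$ sees only $u$ and $t$, we get $u\Vdash\bigwedge_P\square\alpha\land\bigwedge_L\neg\gamma$, and symmetrically for $v$. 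Hence both diamonds hold at $w$.

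\textbf{Impossibility.} Suppose $P_1,P_2,P_3\subseteq B$ are nonempty, pairwise disjoint, and $w\Vdash\lozenge(\bigwedge_{P_i}\square\alpha\land\bigwedge_{P_j,\,j\neq i}\neg\gamma)$ for each $i$. I would substitute $\bigwedge P_1/p$, $\bigwedge P_2/q$, $\bigwedge P_3/r$ into $I_2$. Let $w_i$ be the witnessing worlds. Because $P_j$ is nonempty and all of its members are false at $w_i$, we have $w_i\Vdash\neg\bigwedge P_j$ for $j\neq i$. Thus $w_i\Vdash\square p_i\land\neg p_j\land\neg p_k$, where $p_1,p_2,p_3$ stand for $p,q,r$. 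This means $w_i$ falsifies $\square p_i\rightarrow p_j\lor p_k$, so every disjunct of the substituted $I_2$ fails at $w$, which is a contradiction. Alternatively, one can argue on frames: by $R'_2$ we have $w_iRw_j$ for some $i\neq j$, so $\bigwedge P_i$ is true at $w_j$, contradicting the falsity of a member of $P_i$ there. For more than three sets, I would simply pick any three of them.

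\textbf{Main obstacle.} The mathematics is routine; the delicate step is fixing the intended reading. I need nonemptiness, so that $\neg\bigwedge P_j$ follows, and disjointness, so that each pattern is consistent. I also need the buttons to be necessitated, since persistence along $R$ is what turns $R'_2$, or the axiom $I_2$, into a contradiction. I would therefore state the pushed reading explicitly and note the chain counterexample showing that it is necessary.
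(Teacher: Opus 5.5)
Your proof is correct and is essentially the paper's. For existence you use the four-element Boolean lattice of Figure~\ref{fig:boole4}, and for impossibility the width bound $R'_2$; your substitution of $\bigwedge P_1,\bigwedge P_2,\bigwedge P_3$ into $I_2$ is just the syntactic form of that bound, with the small bonus that it works in any model where the theorems of $S4.2+I_2$ hold, not only on $R'_2$-frames. You are also right that the paper's one-line width argument only goes through under the pushed reading $\bigwedge_{\alpha\in P}\square\alpha$, since with bare $\alpha$ nothing persists along $R$.

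There is one slip in your side remark. In the chain $w\,R\,a\,R\,b\,R\,c$ the final world $c$ must make every $\beta\in B$ true, otherwise those $\beta$ are not buttons at $w$, so $c$ cannot carry an arbitrary pattern. Put the third pattern at $w$ itself (by reflexivity) or add a top world above $c$; then the counterexample to the literal reading goes through.
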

\begin{proof}
    Observe that one can satisfy such formula in a model built on a frame like on a figure 1. The limit on the width of the model forbids that such a composition be possible with three or more subsets of $B$.
\end{proof}

These results can be easily generalized. For any natural number $n \geqslant 4$, let $X$ be a set of all its subsets of size $n-2$.  Consider the following axioms:

\begin{align*}
 &A_n:  \bigvee_{Y \in X} \square ( \bigwedge_{k \in Y, i, j \in n \setminus Y} \square p_k \rightarrow (\square (\square p_i \rightarrow p_j) \lor \square (\square p_j \rightarrow p_i))), \\
    &I_n: \bigvee_{i \in n} \square (\square p_i \rightarrow \bigvee_{j \in n \setminus \{i\}} p_j);
\end{align*}
and the following conditions:
\begin{align*}
    &R_n: \text{for any worlds }w_0, \dots w_{n+1}\text{ if }w0Rw_1, \dots w_{n+1}\text{ then }|w_i|\text{ is connected, for some }i \in \{1, \dots, n\}, \\
    &R'_n\text{: There are no R-antichains of cardinality >n with a common predecessor.}
\end{align*}
\begin{theorem}
    Each $S4.2+A_n$ is complete with respect to the class of all frames satisfying $R_n$.
\end{theorem}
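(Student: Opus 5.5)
The plan generalizes the proof of Theorem \ref{c1a2} from $n=2$ to arbitrary $n$. As there, the class of frames in the statement is understood as the reflexive, transitive, directed frames satisfying $R_n$. The condition $R_n$ is read as: whenever $w_0Rw_1,\dots,w_{n+1}$ and the $w_i$ are pairwise $R$-incomparable, some $|w_i|$ is connected. This matches $A_n$: a disjunct of $A_n$ is indexed by a set $Y$ of $n-2$ indices, so an $n$-element family of antichain worlds is in play. I would first fix the indexing so that the number of worlds in $R_n$ agrees with the number of variables $p_k$ in $A_n$, exactly as $R_2$ agrees with $A_2$ (three worlds, three variables). If the two counts do not match as written, I would adjust $R_n$ to match $A_n$. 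The proof then has two halves, soundness and canonical completeness.

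\textbf{Soundness.} Suppose $w_0\not\Vdash A_n$ in such a frame. For every $Y$ of size $n-2$, the $Y$-disjunct fails, so some $w_Y$ with $w_0Rw_Y$ satisfies $\square p_k$ for all $k\in Y$. Moreover, for the two remaining indices $i,j$, $w_Y$ falsifies both $\square(\square p_i\to p_j)$ and $\square(\square p_j\to p_i)$. The latter means there are $s,t\in|w_Y|$ with $s\Vdash\square p_i\wedge\neg p_j$ and $t\Vdash\square p_j\wedge\neg p_i$. Such $s$ and $t$ are $R$-incomparable, so no $|w_Y|$ is connected.

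Next I would extract from these witnesses worlds $u_1,\dots,u_n\in|w_0|$ with $u_m\Vdash\square p_m$ that are pairwise incomparable. I would imitate the directedness argument of Theorem \ref{c1a2}. Take common extensions of $w_Y$ and suitable $u_m$. Connectedness of one $|w_Y|$ would force $\square p_i\to p_j$ or its converse to hold at every point above $w_Y$, contradicting the choice of $s,t$. Applying $R_n$ to the resulting antichain then gives a connected $|w_Y|$, which is the desired contradiction.

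\textbf{Completeness.} I would show the canonical model of $S4.2+A_n$ satisfies $R_n$; reflexivity, transitivity and directedness come from $S4.2$. Suppose the worlds $w_1,\dots$ are pairwise incomparable successors of $w_0$ and each $|w_m|$ is disconnected. For each $m$ pick $s_m,t_m\in|w_m|$ that are incomparable. Canonicity then yields formulas $\alpha_m,\beta_m$ with $\square\alpha_m,\neg\beta_m\in s_m$ and $\square\beta_m,\neg\alpha_m\in t_m$. Likewise, pairwise incomparability of the $w_m$ yields formulas $\gamma_m$ with $\square\gamma_m\in w_m$ and $\gamma_m\notin w_{m'}$ for $m'\neq m$. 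This is obtained via disjunctions, as in the proof of Theorem \ref{cli2}.

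I then substitute for each $p_k$ a disjunction built from the $\gamma$'s and the $\alpha$'s and $\beta$'s, generalizing $(\alpha\lor\alpha')/p$, $(\beta\lor\alpha'')/q$, $(\beta'\lor\beta'')/r$. The substitution must be chosen so that every $Y$-disjunct of $A_n$ fails at $w_0$: the witness $w_Y$ makes the antecedent true, and $s_Y,t_Y$ refute both disjuncts of the consequent. The degenerate cases where some pairs $\{s_m,t_m\}$ coincide are dismissed as in Theorem \ref{c1a2}.

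The main obstacle is designing this substitution, and I would settle it before anything else. With $\binom{n}{2}$ sets $Y$ but only $n$ variables, each variable must carry the $\alpha$/$\beta$ data for several pairs at once without accidentally making some $p_j$ true at $s_Y$ or $t_Y$. I would first try assigning $p_k:=\gamma_k\vee\bigvee_{Y\not\ni k}(\text{the }\alpha_Y\text{ or }\beta_Y\text{ according to }k\text{'s role in }Y)$. I would then check the needed falsities pairwise, using that each $\square$-formula is only forced upward from its own witness.
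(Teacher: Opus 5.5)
\textbf{There is a genuine gap in the completeness half, and it cannot be closed: under your reading the statement is false.}

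Your soundness half is correct, and it is more direct than you make it. If $w_Y R w_{Y'}$ for $Y\neq Y'$, pick $k\in Y\setminus Y'$. Then $p_k$ holds throughout $|w_{Y'}|$, contradicting whichever of $s_{Y'},t_{Y'}$ refutes $p_k$. So the $w_Y$ themselves form an antichain of successors with disconnected cones, and you need neither the $u_m$ nor directedness. (As you wrote it, applying $R_n$ to the $u_m$ would only make some $|u_m|$ connected, which is not a contradiction.)

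The completeness half is where it breaks. The substitution you postpone does not exist, because the canonical frame of $S4.2+A_n$ does not satisfy $R_n$. This already happens for $n=2$, so the paper's argument for Theorem \ref{c1a2} has the same hole, and the paper's proof here only claims to generalize that argument.

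Let $F$ consist of $w_0,w_1,w_2,w_3,a,b,\top$, with $R$ the reflexive transitive closure of $w_0Rw_m$, $w_mRa$, $w_mRb$ $(m=1,2,3)$, $aR\top$, $bR\top$. $F$ is reflexive, transitive and directed, and it violates $R_2$. But $F\models A_2$. By the argument above, refuting witnesses $x_1,x_2,x_3$ must be pairwise incomparable with disconnected cones. Hence $\{x_1,x_2,x_3\}=\{w_1,w_2,w_3\}$, and every refuting point $s_m,t_m$ lies in $\{a,b\}$. Then $x_3\Vdash\square r$ makes $r$ true at both $a$ and $b$, contradicting $s_1\Vdash\neg r$. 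Now consider
\[
\psi:=\bigvee_{m=1}^{3}\square\Bigl(\square p_m\rightarrow\bigvee_{l\neq m}p_l\lor\square(\square q_m\rightarrow r_m)\lor\square(\square r_m\rightarrow q_m)\Bigr).
\]
$\psi$ is valid on every frame of the class, since a refutation produces three pairwise incomparable successors with disconnected cones. Yet $\psi$ fails at $w_0$ in $F$ when $p_m$ is true on $|w_m|$, $q_m$ on $\{a,\top\}$ and $r_m$ on $\{b,\top\}$. Since $F$ is an $S4.2+A_2$-frame, $\psi\notin S4.2+A_2$, so completeness fails. Adding more middle points gives the same failure for every $A_n$ and every choice of antichain size in $R_n$.

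Concretely, your last step (``each $\square$-formula is only forced upward from its own witness'') is what fails. To get $s_Y\Vdash\neg p_j$ you need $s_Y\notin|w_{Y'}|$ for every $Y'\ni j$. Directedness makes the cones $|w_m|$ overlap, and $R_n$ says nothing about where the disconnecting pairs sit; in $F$ they sit in the common part of all cones. The defect is in the axiom, not your technique: $A_n$ reuses the antecedent variable of one disjunct inside the consequents of the others. For an axiom with fresh variables in each consequent, like $\psi$ and its $n$-ary analogue, your canonical-model argument goes through verbatim: substitute $\gamma_m$ for $p_m$ and $\alpha_m,\beta_m$ for $q_m,r_m$. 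So $R_n$ is the frame condition of that axiom, and the statement as given for $A_n$ is false.
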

\begin{theorem}
    Each $S4.2+I_n$ is complete with respect to the class of all frames satisfying $R'_n$.
\end{theorem}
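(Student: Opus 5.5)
We follow the proof of Theorem \ref{cli2} step for step, replacing the three worlds by $n+1$ worlds. As there, we read the statement as completeness of $S4.2+I_n$ with respect to reflexive, transitive, directed frames satisfying $R'_n$. The proof has two parts: soundness (every such frame validates $I_n$) and completeness (the canonical model of $S4.2+I_n$ satisfies $R'_n$). Reflexivity, transitivity and directedness of the canonical model come from the $S4.2$ axioms as usual. We index the variables as $p_0,\dots,p_n$, so that $I_n$ forbids antichains of size $n+1$, matching $R'_n$ and agreeing with $I_2$ in the case $n=2$.

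\textbf{Soundness.} Let $\mathbb{F}$ satisfy the conditions and suppose some world $w$ falsifies $I_n$. Then for each $i\in\{0,\dots,n\}$ there is $w_i$ with $wRw_i$, $w_i\Vdash\square p_i$ and $w_i\Vdash\neg p_j$ for all $j\neq i$.
\begin{itemize}
\item The worlds $w_i$ are pairwise distinct: $w_i\Vdash p_i$ by reflexivity, while $w_j\Vdash\neg p_i$ for $j\neq i$.
\item By $R'_n$ they cannot form an antichain, so $w_iRw_j$ for some $i\neq j$.
\item Then $w_j\Vdash p_i$ because $w_i\Vdash\square p_i$, contradicting $w_j\Vdash\neg p_i$.
\end{itemize}

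\textbf{Completeness.} Suppose $w_0Rw_1,\dots,w_{n+1}$ in the canonical model with $w_1,\dots,w_{n+1}$ pairwise $R$-incomparable.
\begin{itemize}
\item For each ordered pair $i\neq j$, since $\neg w_iRw_j$, there is $\alpha_{ij}$ with $\square\alpha_{ij}\in w_i$ and $\alpha_{ij}\notin w_j$.
\item Put $\alpha_i=\bigwedge_{j\neq i}\alpha_{ij}$. Since $\square$ distributes over conjunction, $\square\alpha_i\in w_i$, and $\neg\alpha_i\in w_j$ for every $j\neq i$.
\item Substitute $\alpha_i$ for the $i$-th variable of $I_n$. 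This instance lies in $w_0$, so some disjunct $\square(\square\alpha_i\to\bigvee_{j\neq i}\alpha_j)$ belongs to $w_0$.
\item As $w_0Rw_i$ and $\square\alpha_i\in w_i$, we get $\alpha_j\in w_i$ for some $j\neq i$. This contradicts $\neg\alpha_j\in w_i$.
\end{itemize}
Hence the canonical frame satisfies $R'_n$, and the standard canonical model argument gives completeness.

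The only delicate step is building the separating formulas. In Theorem \ref{cli2} a single formula per world was chosen implicitly. Here we must take a conjunction over all other worlds, so that one formula $\alpha_i$ fails at every $w_j$ with $j\neq i$ at once. With this step done, the rest is routine. The main thing to keep straight is the indexing: $I_n$ needs $n+1$ variables so that it excludes exactly antichains of size $n+1$.
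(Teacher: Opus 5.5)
Your proof is correct and takes the same route as the paper, which proves this theorem as a direct generalization of Theorem~\ref{cli2}: soundness because $R'_n$ forces one of the $n+1$ witness worlds to see another, and completeness by showing the canonical frame satisfies $R'_n$ using separating formulas and an instance of $I_n$. Two of your steps make explicit what the paper leaves implicit. Using $n+1$ variables fixes the off-by-one in the paper's displayed $I_n$ (with $i\in n$) relative to $I_2$ and $R'_n$. The conjunctions $\alpha_i=\bigwedge_{j\neq i}\alpha_{ij}$ construct the single separating formula per world that the proof of Theorem~\ref{cli2} simply assumes exists.
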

\begin{theorem}
    Each $S4.2 + A_n$ has the finite model property, and so is decidable. 
\end{theorem}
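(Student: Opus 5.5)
Take the finite model property argument for $S4.2+A_2$ (Theorem \ref{fmpa2}) as the template. Fix a formula $\varphi \notin S4.2+A_n$ and let $\mathfrak{M}$ be the canonical model of $S4.2+A_n$, which by the completeness theorem stated just above satisfies $R_n$. Let $X$ be the closure of $\{\varphi\}$ under subformulas and put $X' = X \cup \{\neg\alpha : \alpha \in X\}$. This set is finite, so the filtration $\mathfrak{M}_{X'}$ from the lemma preceding Theorem \ref{fmpa2} has at most $2^{|X'|}$ worlds. By the filtration theorem, $\varphi$ fails at some world of $\mathfrak{M}_{X'}$. So the whole task is to show that the frame of $\mathfrak{M}_{X'}$ is reflexive, transitive and directed and satisfies $R_n$. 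Reflexivity and transitivity follow at once from the definition of $R_{X'}$. For directedness I will use the standard argument: if $[w]R_{X'}[u],[v]$, the $.2$ axiom in $w$ gives a common successor in $\mathfrak{M}$, and this successor passes down to the filtration.

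The substantive step is preserving $R_n$. Suppose $[w_0]R_{X'}[w_1],\dots,[w_{n+1}]$, with the $[w_i]$ mutually $R_{X'}$-incomparable. I first want to lift this configuration to $\mathfrak{M}$. Incomparability gives a formula $\alpha_{ij}\in X'$ with $\lozenge\alpha_{ij}\in w_i$ and $\lozenge\alpha_{ij}\notin w_j$. Together with the $S4$ behaviour of $R_{X'}$, this yields worlds $w_i'$ with $w_0 R w_i'$ and $w_i' \sim w_i$, because the inclusion $\lozenge$-sets for $X'$ that defines $R_{X'}$ lets us realize the needed diamonds inside $|w_0|$. These $w_i'$ are then pairwise incomparable in $\mathfrak{M}$, since $R$-accessibility implies $R_{X'}$-accessibility of the classes.

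By $R_n$ in $\mathfrak{M}$, some $|w_i'|$ is connected. It remains to show that $|[w_i]|$ is connected in the filtration. This is exactly the connectedness-preservation claim proved inside Theorem \ref{fmpa2}. There the argument used that $w$ contains every instance $\square(\square\alpha\to\beta)\lor\square(\square\beta\to\alpha)$ with $\alpha,\beta\in X'$, and that $X'$ is closed under negation; here $w_i'$ contains these instances because $|w_i'|$ is connected.

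The main obstacle is the lifting step. A world $[u]$ that is $R_{X'}$-accessible from $[w_0]$ need not have a representative $R$-accessible from $w_0$, and the lemma only supplies condition (2), about single formulas. The first thing I would try is to avoid lifting altogether and argue syntactically, as in the canonical-model proof of Theorem \ref{c1a2}. Assume every $|[w_i]|$ is non-connected. For each $i$ pick a witnessing pair of $X'$-formulas $\alpha_i,\beta_i$ with $\square\alpha_i\wedge\neg\beta_i$ true at one $R_{X'}$-successor and $\square\beta_i\wedge\neg\alpha_i$ at another; these can be taken as Boolean combinations over $X'$ describing the classes. Then substitute suitable disjunctions of these formulas into $A_n$ to falsify it at $w_0$, contradicting $A_n\in w_0$. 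Working with the finitely many class-characteristic formulas of $X'$ should make this substitution legitimate. Once $R_n$ is preserved, the finite model property follows, and decidability follows by the usual Harrop-style argument, since $S4.2+A_n$ is finitely axiomatized.
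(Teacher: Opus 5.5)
You follow the paper's own route: the filtration $\mathfrak{M}_{X'}$ from Theorem~\ref{fmpa2}, plus preservation of $R_n$. The proof breaks exactly at the step you flag.

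\textbf{The lifting gap.} The lifting claim in your second paragraph is false, and the syntactic fallback does not repair it. The substituted instance of $A_n$ is not in $X'$, so the filtration gives no control over how it is evaluated along $R_{X'}$. Refuting it at $w_0$ in $\mathfrak{M}$ would need genuine $R$-successors of $w_0$, which is the lifting problem again. Your directedness argument has the same defect: $[w]R_{X'}[u],[v]$ does not make $u,v$ successors of $w$, so $.2$ at $w$ yields nothing. The paper's connectedness argument does not close the gap either. It only compares classes of points already lying in $|w|$, where comparability is immediate from $wRu\Rightarrow[w]R_{X'}[u]$.

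\textbf{$R_n$ is the wrong target.} It is strictly stronger than what $A_n$ enforces, so no argument that the finite frame satisfies $R_n$ can succeed. Let $F$ consist of a root $o$, pairwise incomparable points $y_1,y_2,y_3$ above it, two incomparable points $s,t$ seen by every $y_i$, and a top $e$. This frame is reflexive, transitive and directed, and it violates $R_2$. Yet $F\models A_2$:
\begin{enumerate}
\item The three worlds required by a refutation (those forcing $\square p$, $\square q$, $\square r$) are pairwise incomparable, so they must be $y_1,y_2,y_3$.
\item Each of their witnessing incomparable pairs must then be $\{s,t\}$.
\item Matching these pairs by the value of $r$, and then of $q$, forces one of $s,t$ to satisfy $\square p\wedge\neg p$.
\end{enumerate}
Hence the Jankov--Fine formula $\chi(F)$ is not in $S4.2+A_2$. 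But any reflexive transitive frame refuting $\chi(F)$ has a generated subframe mapping p-morphically onto $F$. Moreover, $R_2$ transfers along such maps: lift $o,y_1,y_2,y_3$ by the back condition, and note that images of chains are chains. Consequently:
\begin{itemize}
\item The canonical frame violates $R_2$, contrary to what you take from the completeness theorem.
\item Theorem~\ref{c1a2} itself fails, since $\chi(F)$ is valid on every $R_2$-frame.
\item $\mathfrak{M}_{X'}$ violates $R_2$ whenever $X\supseteq\mathrm{Sub}(\chi(F))$.
\end{itemize}
Adding further middle points gives the same failure for larger $n$.

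\textbf{What to verify instead.} The right condition is the actual Sahlqvist correspondent of $A_n$. This forbids a configuration $w_0Ry_i$, $y_iRa_i,b_i$ carrying the specific non-accessibilities read off from the minimal valuation. For example, in $A_2$ the witness of $\square q\wedge\neg r$ below $y_1$ must be seen from none of the three points where $\square r$ is required. This condition is universal in $R$ and $\neg R$, so it passes to subframes, and directedness passes to cofinal subframes. The natural repair is therefore a selective filtration producing a finite cofinal subframe of the canonical frame, as in Zakharyaschev's theorem that cofinal subframe logics over $K4$ have the finite model property. A quotient is the wrong tool, because positive accessibilities cannot be lifted from it.
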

\begin{theorem}
    Each $S4.2 + I_n$ has the finite model property, and so is decidable.
\end{theorem}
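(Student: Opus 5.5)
We prove that $S4.2+I_n$ has the finite model property by generalizing the proof of Theorem \ref{fmpi2}. We work in the canonical model $\mathfrak{M}$ of $S4.2+I_n$. By the previous theorem its frame is reflexive, transitive, directed, and satisfies $R'_n$: no $R$-antichain of cardinality $>n$ has a common predecessor. Suppose a formula $\varphi\notin S4.2+I_n$. Let $X$ be the (finite) set of subformulas of $\varphi$. We take the \emph{largest} filtration $\mathfrak{M}_X$, where $[w]R_X[u]$ iff for all $\square\alpha\in X$, $\square\alpha\in w$ implies $\alpha\in u$. Actually, we take its transitive variant, the Lemmon filtration: $[w]R_X[u]$ iff for all $\square\alpha\in X$, $\square\alpha\in w$ implies $\square\alpha\in u$. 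This variant is a filtration for transitive models and is itself reflexive and transitive. By the filtration theorem, $\varphi$ fails in the finite model $\mathfrak{M}_X$. It remains to check that the frame of $\mathfrak{M}_X$ lies in the class characterized by the previous theorem. Then, by soundness, it validates $S4.2+I_n$, and decidability follows as in the corollaries above, since the logic is finitely axiomatized and has the finite model property.

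Reflexivity and transitivity are immediate from the definition. For directedness, suppose $[w]R_X[u]$ and $[w]R_X[v]$. The difficulty is that these need not come from actual $R$-successors of $w$. We argue with $\square$-formulas of $X$ instead. In the finite model, a world is determined up to "final behaviour" by its set of boxed formulas from $X$. Pick in $\mathfrak{M}$ an $R$-maximal-cluster-like extension: by directedness of $\mathfrak{M}$ and finiteness of $X$, from any $u$ there is $u'$ with $uRu'$ whose set of true $\square$-formulas of $X$ is maximal. Then $u'$ and any $v'$ chosen likewise from $v$ must agree with the unique "top" type reachable from $w$. This step needs care, and if it fails I would fall back to the largest filtration, where directedness is easier but transitivity must be checked.

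The main obstacle is preserving $R'_n$. It is not automatic: the filtration may create new antichains, because $[u],[v]$ can be $R_X$-incomparable even though $u,v$ are comparable in $\mathfrak{M}$ (the implication $wRu\Rightarrow[w]R_X[u]$ only helps in the other direction). So we argue syntactically. Suppose $[w_0]R_X[w_1],\dots,[w_{n+1}]$ form an antichain. Incomparability of $[w_i]$ and $[w_j]$ yields formulas $\square\alpha_i\in X$ true at $w_i$ and false at the other $w_j$. Since each $[w_i]$ sees a world where $\square\alpha_i$ holds and it is preserved upward, each $w_i$ satisfies $\square\square\alpha_i$-like conditions. Take an instance of $I_n$ with $p_i:=\square\alpha_i$ (using the $n+1$ disjuncts indexed by the $n+1$ worlds). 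The derived instance $\bigvee_i\square(\square\square\alpha_i\rightarrow\bigvee_{j\neq i}\square\alpha_j)$ belongs to $w_0$. Because it involves only formulas $\square\alpha_j\in X$, its failure transfers along $R_X$ from $w_0$ to the $[w_i]$: we have $\square\alpha_i$ at $w_i$ and no $\square\alpha_j$ for $j\neq i$. This yields the contradiction.

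I expect the delicate point to be this transfer. We must ensure that the Lemmon condition lets us conclude from $\square(\ldots)\in w_0$ that the body holds at $w_i$. For this I would add the needed formulas $\square\alpha_j$ and their disjunctions to $X$, keeping $X$ finite, before filtering.
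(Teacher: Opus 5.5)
Your proposal has a genuine gap. You filter the \emph{whole} canonical model. There, $[w_0]R_X[u]$ gives no information that $u$ lies in the cone of $w_0$, so neither directedness nor $R'_n$ need survive, whichever filtration you use.

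For directedness, let $X$ be the subformulas of the non-theorem $\square p\lor\square\neg p$. Take $w_0\ni\neg\square p,\neg\square\neg p$, $u\ni\square p$ and $v\ni\square\neg p$ anywhere in $\mathfrak{M}$. In both the Lemmon and the largest filtration, $[w_0]$ sees $[u]$ and $[v]$. Yet no class is seen by both, since its members would have to contain both $p$ and $\neg p$. So your ``unique top type'' claim is false. The fallback to the largest filtration fails in the same way, and that filtration is not transitive in general anyway.

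For $R'_n$, suppose $\square p_1,\dots,\square p_{n+1}$ are the only boxed formulas in $X$ (e.g.\ $\varphi=\bigvee_i\square p_i$). Worlds $w_i\ni\square p_i,\neg\square p_j$ $(j\neq i)$ then form an $R_X$-antichain of size $n+1$ below a class $[w_0]$ containing every $\neg\square p_i$. Your syntactic repair is to enlarge $X$ by the instances $\square(\square\square\alpha_i\rightarrow\bigvee_{j\neq i}\square\alpha_j)$. You give no reason why this closes off finitely. These are new boxed formulas of higher modal depth, and they may become the only witnesses of incomparability in the refined filtration, which would call for still deeper instances.

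Your diagnosis of the $R'_n$ obstacle is also backwards. In every filtration $uRv$ implies $[u]R_X[v]$, so pairwise $R_X$-incomparable classes have pairwise $R$-incomparable representatives. Antichains lift, and this is exactly the fact the paper's one-line argument invokes. All that is missing is a common $R$-predecessor.

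The missing idea is to restrict first to the submodel $\mathfrak{M}^r$ generated by a world $r$ refuting $\varphi$; truth is preserved in generated submodels. Then filter $\mathfrak{M}^r$ with your Lemmon filtration, which is the right choice for transitivity; the paper's largest filtration does not guarantee it. Every class now has a representative in the cone of $r$.
\begin{itemize}
\item Directedness: given $[u],[v]$, directedness of $\mathfrak{M}$ gives $z$ with $uRz$ and $vRz$, hence $[u]R_X[z]$ and $[v]R_X[z]$.
\item $R'_n$: an $R_X$-antichain $[u_1],\dots,[u_{n+1}]$ yields an $R$-antichain $u_1,\dots,u_{n+1}$ with common predecessor $r$, contradicting $R'_n$ in the canonical frame.
\end{itemize}
No enlargement of $X$ and no syntactic use of $I_n$ is needed.
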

The proofs are simple generalizations of the proofs of theorems \ref{c1a2}, \ref{cli2}, \ref{fmpa2}, \ref{fmpi2} respectively.
\begin{fact}
    For any $n \geqslant 2$ the following hold: \begin{equation}\label{1}
        B_n(S4.2+A_n), B_n(S4.2+I_n)
    \end{equation}\begin{equation}
        S4.2 +A_n \supset S4.2+A_{n+1},
    \end{equation} \begin{equation}
        S4.2 +I_n \supset S4.2 + I_{n+1},
    \end{equation}\begin{equation}\label{przedostatni}
        S4.3 \supset S4.2 +I_n \supset S4.2 +A_n \supset S4.2,
    \end{equation}\begin{equation}\label{ostatni}
        \bigcap_{n \geqslant 2}S4.2 + A_n = \bigcap_{n \geqslant2} S4.2 +I_n = S4.2.
    \end{equation}
\end{fact}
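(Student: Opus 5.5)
We argue semantically throughout, using the completeness results above: $S4.2+A_n$ is determined by the reflexive, transitive, directed frames satisfying $R_n$, and $S4.2+I_n$ by those satisfying $R'_n$. Every claimed inclusion between logics then becomes a reverse inclusion between frame classes, and every strictness or non-membership claim becomes the existence of a separating finite frame. As in the case $n=2$, a convenient family of test frames is $M_k$: a root, a top, and $k$ pairwise incomparable middle points, possibly with short chains hung below the top.

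For (\ref{1}) we do two things. First, the $2^n$-element Boolean lattice (the $n$-dimensional analogue of Figure~\ref{fig:boole4}) carries $n$ independent buttons, realised by its $n$ coatom-generated up-sets. This frame has width $n$ above the root, so it satisfies $R'_n$, and hence also $R_n$ by the inclusion argued below. Second, $n+1$ independent buttons cannot occur. Independence at $w$ forces $n+1$ worlds $w_i \Vdash \square\beta_i\wedge\bigwedge_{j\neq i}\neg\beta_j$ (pushing exactly one button). These worlds form an antichain above $w$, since $\square\beta_i$ persists. For $I_n$ this violates $R'_n$ directly, just as in the soundness half of Theorem~\ref{cli2}. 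For $A_n$ we instead push exactly $n-1$ buttons with a common index set and repeat the directedness argument of Theorem~\ref{c1a2}. This is the main obstacle: the indexing of $A_n$ (subsets of size $n-2$, with $i,j$ ranging outside $Y$) has to match the combinatorics exactly. I expect to have to fix the intended reading of $R_n$, whose stated index range is inconsistent, before this step goes through.

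For the monotonicity claims we show $\vdash_{S4.2+A_n} A_{n+1}$ and $\vdash_{S4.2+I_n} I_{n+1}$ syntactically. For $I$ this is a substitution: replace $p_n$ by $p_n\vee p_{n+1}$ in $I_n$ and weaken, using $\square p_n\to\square(p_n\vee p_{n+1})$. Alternatively, note semantically that $R'_n$ implies $R'_{n+1}$. For strictness, the frame $M_{n+1}$ validates $I_{n+1}$ but not $I_n$. The analogous frame with $n+1$ incomparable middle points, each with a non-connected cone (each point splits into two incomparable successors before the top), separates $A_{n+1}$ from $A_n$.

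For (\ref{przedostatni}) we argue through frames. An $S4.3$-frame has no antichains of size $2$, so it satisfies $R'_n$. Next, $R'_n$ implies $R_n$: given $n+1$ successors, if no two of them are comparable, some cone must have width at most one, or else picking one point from each of two incomparable successors yields a large antichain. This needs care, and possibly a direct syntactic derivation of $A_n$ from $I_n$ by substitution instead. Trivially $S4.2\subseteq S4.2+A_n$. Each inclusion is strict: the Boolean lattice of Figure~\ref{fig:boole4} lies in $I_n\setminus S4.3$; $M_{n+1}$ with connected cones lies in $A_n\setminus I_n$; and $M_{n+1}$ with non-connected cones refutes $A_n$ in $S4.2$.

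For (\ref{ostatni}), $\supseteq$ is immediate. For $\subseteq$, take $\varphi\notin S4.2$. Since $S4.2$ has the finite model property, $\varphi$ fails on a finite rooted directed preorder $F$ with at most $N$ points. Every antichain in $F$ has fewer than $N+1$ elements, so $F$ satisfies $R'_N$ and, by the previous step, $R_N$. Hence $\varphi\notin S4.2+I_N$ and $\varphi\notin S4.2+A_N$.
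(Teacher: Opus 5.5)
You have genuine gaps in (1) and in the separating frames for $A_n$; (2) in particular is left unproved, because the $A_n$ half of its monotonicity is never derived.

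\textbf{The lower bound in (1).} Your lower bound rests on the claim that the $2^n$-element Boolean lattice has width $n$ above its root. That is false for $n\ge 4$: by Sperner's theorem its largest antichain has $\binom{n}{\lfloor n/2\rfloor}$ elements (the six $2$-element sets when $n=4$), so it violates $R'_n$.

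For $I_n$, read with $n+1$ variables to match $I_2$ and $R'_n$, no other model can repair this. Suppose $\beta_1,\dots,\beta_n$ are independent at $w$. For each $B$ with $|B|=\lfloor n/2\rfloor$ there is a successor $w_B$ where exactly the buttons in $B$ are pushed. The formula $\psi_B=\bigwedge_{b\in B}\square\beta_b$ holds throughout $R[w_B]$ and fails at every other such $w_{B'}$. Substituting $n+1$ of the $\psi_B$ for $p_0,\dots,p_n$ falsifies every disjunct of $I_n$ at $w$ once $\binom{n}{\lfloor n/2\rfloor}\ge n+1$, i.e.\ for all $n\ge 4$. So $S4.2+I_4$ admits at most three independent buttons, and the $I_n$ half of (1) is false from $n=4$ on; the paper's ``trivial'' misses this too.

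Substitution is also what your upper bounds need. Pushing exactly one button gives $\neg\square\beta_j$, not $\neg\beta_j$, so take $p_i:=\square\beta_i$. Moreover, a model in which all theorems hold need not sit on an $R'_n$-frame, so ``the frame violates $R'_n$'' is not by itself a contradiction.

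\textbf{The frames for $A_n$.} You infer ``refutes $A_n$'' from ``violates $R_n$''. The paper gives soundness and completeness for $R_n$, not correspondence, and the inference fails. If $A_n$ is false at $w$, each of its $\binom{n+1}{2}$ disjuncts gives a successor $x_Y$ with $\square p_k$ for all $k\in Y$, and above it worlds $s_Y\Vdash\square p_i\land\neg p_j$ and $t_Y\Vdash\square p_j\land\neg p_i$. For $Y\ne Y'$, $x_YRx_{Y'}$ is impossible: some $k\in Y\setminus Y'$ would be necessary at $x_{Y'}$, yet one of $s_{Y'},t_{Y'}$ falsifies $p_k$. So a refutation needs $\binom{n+1}{2}$ pairwise incomparable successors with non-connected cones.

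Consequently your $M_{n+1}$ with split middle points validates $A_n$ for every $n\ge 3$. It separates neither $S4.2+A_n$ from $S4.2+A_{n+1}$ nor $S4.2+A_n$ from $S4.2$. The same count shows $\mathbb{B}_n\models A_n$ only while $\binom{n}{\lfloor n/2\rfloor}<\binom{n+1}{2}$, i.e.\ for $n\le 6$; beyond that the lower bound for $A_n$ needs a new argument.

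A working separator has one middle point $x_e$ for each pair $e=\{i,j\}$ with $i<j$, each with its own incomparable successors $s_e,t_e$ below a common top. Make $p_j$ false at $s_e$, $p_i$ false at $t_e$, and every variable true elsewhere. This frame refutes $A_n$. It validates $A_{n+1}$, since only the $\binom{n+1}{2}<\binom{n+2}{2}$ points $x_e$ are pairwise incomparable with non-connected cones.

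The same configuration settles the step you left open, with no need for $R_n$. The worlds pushing exactly $Y$ (with $|Y|=n-1$), together with their extensions pushing $Y\cup\{i\}$ and $Y\cup\{j\}$, falsify $A_n$ under $p_k:=\square\beta_k$. Monotonicity for $A$, which your proposal only asserts, is provable: the instance of $A_n$ with $p_n\land p_{n+1}$ for $p_n$ implies $A_{n+1}$ in $S4$.

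\textbf{Part (5).} Your proof via the finite model property of $S4.2$ and the trivial bound $N$ on antichains in an $N$-point countermodel is correct. It is also more robust than the paper's appeal to Boolean lattices. Those are partial orders, so they validate Grz and cannot refute every non-theorem of $S4.2$.
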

\begin{proof}
    (\ref{1})-(\ref{przedostatni}) are trivial. For (\ref{ostatni}) observe that a boolean lattice $\mathbb{B}_n$ is a frame for $S4.2+I_{n+1}$, and so for $S4.2+A_{n+1}$ as well.
\end{proof}
\section{Summary}

The logics that limit the size of independent buttons can be widely different. It seems that the strongest of them are those of the form $S4.2+I_n$, for $n \geqslant 1$. However, there are other ways of defining the notion of independence, which can allow us to discover new logics consistent with arbitrarily large families of buttons independent in this new sense. 

Further research could focus on defining the logics (or the appropriate independence notion) so that there is a correspondence between a class of frames within which one can easily navigate due to the buttons and the logic of this class. 

\nocite{*}
\bibliographystyle{eptcs}
\bibliography{generic}
\end{document}